\newtheorem{theorem}{Theorem}
\newtheorem{lemma}{Lemma}
\newtheorem{corollary}{Corollary}
\newtheorem{definition}{Definition}
\newtheorem{remark}{Remark}
\newenvironment{proof}{\noindent\textbf{Proof:}}{\hfill $\blacksquare$\\}
\newcommand\set[1]{\ensuremath{\{ #1 \} }}
\renewcommand\int[1]{\ensuremath{\llbracket #1 \rrbracket}}
\newcommand\cur[1]{\ensuremath{{\cal{#1}}}}
\begin{document}

\begin{center}

{\LARGE \bf On the Termination of Some Biclique Operators on Multipartite Graphs\,\footnote{This work was partially supported by the PICS program of CNRS (France), by the project ``Models on graphs: enumerative combinatorics and algebraic structures" of the Vietnam National Foundation for Science and Technology Development (NAFOSTED) and by the Vietnam Institute for Advanced Study in Mathematics (VIASM).
This article is a complete and improved version of the extended abstract that appeared in \cite{LPCN10}.}}

\bigskip

Christophe Crespelle\,\footnote{Corresponding author: christophe.crespelle@inria.fr}
\ \ \ \ \ \ 
Matthieu Latapy
\ \ \ \ \ \ 
Thi Ha Duong Phan

\bigskip
\bigskip


\begin{minipage}{.9\textwidth}
\centerline{\bf Abstract.}
\medskip
We define a new graph operator, called the \emph{weak-factor graph}, which comes from the context of complex network modelling. The weak-factor operator is close to the well-known clique-graph operator but it rather operates in terms of bicliques in a multipartite graph. We address the problem of the termination of the series of graphs obtained by iteratively applying the weak-factor operator starting from a given input graph. As for the clique-graph operator, it turns out that some graphs give rise to series that do not terminate. Therefore, we design a slight variation of the weak-factor operator, called \emph{clean-factor}, and prove that its associated series terminates for all input graphs. In addition, we show that the multipartite graph on which the series terminates has a very nice combinatorial structure: we exhibit a bijection between its vertices and the chains of the inclusion order on the intersections of the maximal cliques of the input graph.

\bigskip

{\bf Keywords.} Clean-factor Graph, Multipartite Graphs, Graph Series, Complex Network Modelling
\end{minipage}

\end{center}

\section{Introduction}\label{sec-intro}

The \emph{clique-graph} operator \cite{S03} is a well-known graph operator which, given a graph $G$, consists of building the graph $G'$ whose vertices are maximal cliques of $G$ and such that there is an edge between two distinct vertices of $G'$ iff the corresponding cliques of $G$ share at least one common vertex. The clique-graph series, obtained by iteratively applying the clique-graph operator starting from $G$, has been widely studied (see e.g. \cite{P92,BS95}). This series is said to be convergent (in the sense of~\cite{BS95}) if one of the graphs of the series is the graph with one single vertex\footnote{Note that \cite{P92} uses a different definition of convergence which includes the one of \cite{BS95} as a particular case, and also includes periodic behaviours. The notion of termination we use throughout the article is somehow equivalent to the one of \cite{BS95}.}. Then, all the graphs obtained in the following iterations are the same (i.e. reduced to a single vertex).

Here we consider a new operator, called \emph{weak-factor graph}, which comes from the context of complex network modelling and which operates in terms of bicliques in multipartite graphs rather than cliques in graphs. One of the interest of this operator is that it keeps an explicit and complete track of the original graph at all step of the series: given an arbitrary graph of the series it generates, one can univocally retrieve the original graph which gave rise to the series. Given a $k$-partite graph $G=(V_0,\ldots,V_{k-1},E)$ (see Section~\ref{sec:prel} for a definition), where $k\geq 2$, the \emph{weak-factor graph} $G'$ of $G$ is defined as follows (see Definition~\ref{weakfactor} for a more formal definition): $G'$ is the graph $G$ augmented with a new level $V_k$ of vertices, each of which corresponds to one \emph{non-simple} maximal biclique of the bipartite graph which is formed by the edges between the upper level of $G$, i.e. $V_{k-1}$, and the rest of the vertices of $G$, i.e. $\bigcup_{0\leq i\leq k-2} V_i$. Note that we do not add a new vertex at level $V_k$ for all maximal bicliques but only for the \emph{non-simple} ones, i.e. those maximal bicliques that have at least two vertices in $V_{k-1}$ and two vertices in $\bigcup_{0\leq i\leq k-2} V_i$. For each new vertex $x$ at level $V_k$ corresponding to a non-simple maximal biclique $B$, we define its neighbourhood in $G'$ as being the vertices of $B$. The weak-factor series of a graph $G$ is defined as the series obtained by iteratively applying the weak-factor operator starting from the vertex-clique-incidence bipartite graph of $G$ (see Figure~\ref{fig-multipartite}), where the vertices of $G$ are at level $V_0$ and the maximal cliques of $G$ are at level $V_1$.
This series is said to terminate iff, at some point, no new vertices are created. Then, the series is finite and the following graphs are undefined. For example, the series depicted on Figure~\ref{fig-multipartite} terminates since no new vertices are created when applying the weak-factor operator on graph $G_3$ of the series.


As we will show in Section~\ref{sec-wfs}, the weak-factor series does not always terminate. Then, the interest of the multipartite structure of the weak-factor graph is that it will allow us to restrict the definition of the non-simple bicliques we use in the weak-factor operator, taking into account the different levels of the multipartite graphs. In this way, we will be able to devise a refined version of the weak-factor graph, which we call the \emph{clean-factor graph}, whose associated series terminates for all graphs.


\begin{figure}[!h]
\centering
\input{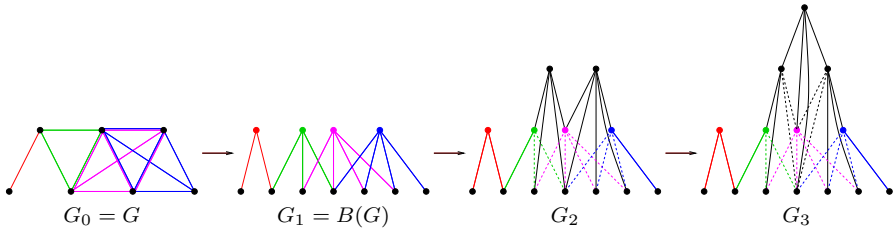}
\caption{Example of the weak-factor series of some graph $G$. From left to right: the original graph $G=G_0$, its vertex-clique-incidence bipartite graph $B(G)=G_1$, the tripartite graph $G_2$ of the series, and the quadripartite graph $G_3$ on which the series terminates. The dashed edges are those belonging to some non-simple maximal biclique used in the factorisation steps. Note that even if those edges were removed, the multipartite graph obtained at termination of the series would still constitute an unambiguous encoding of the original graph, as these edges are encoded by the presence of some vertices in the levels above them.}
\label{fig-multipartite}
\end{figure}

\subsubsection*{Our contribution}

In this paper, we study a new graph operator called the weak-factor operator which naturally arises in complex network modelling (see Section~\ref{sec:motiv}). We show that there are graphs for which the weak-factor series, obtained by iteratively applying the weak-factor operator, does not terminate. Therefore, our main contribution is to define a refinement of the operator, called the \emph{clean-factor graph}, whose series terminates for all graphs, thereby defining an object suitable for modelling purposes. The difference between our approach and the one followed by previous works is that we do not try to determine on which graphs the weak-factor series terminates, but we rather look for minimal constraints to impose to the operator in order to obtain termination for all graphs. The solution we present here obtains termination by imposing constraints to only a bounded number of levels (namely 3) of the multipartite graph on which the operator is applied.


In addition to our termination result, we show that the multipartite graph on which the clean-factor series terminates has a remarkable combinatorial structure. Namely, its vertices are in bijection with the chains of the inclusion order on the non-simple intersections of maximal cliques of the graph (Theorem~\ref{ThCharSeq}), denoted $\cur{L}$ in the rest of the paper. We believe that this link between the termination of the series and the structure of the cliques of the original graph is worth in itself and may be used to study termination of other similarly defined graph operators.\\
Finally, we give an upper bound on the size and computation time of the graph on which the iterated clean-factor series of $G$ terminates, under reasonable hypotheses on the degree distributions of the vertex-clique-incidence bipartite graph of $G$ (which hold for most real-world complex networks), therefore showing that this multipartite graph can be used in practice for complex network modelling.

Let us mention that this work is an improved and complete version of the extended abstract that appeared in \cite{LPCN10}. In \cite{LPCN10}, the notion of clean-factor is slightly different from the one we use here. As a consequence, we could not prove that imposing constraints only to a bounded number of levels is enough to guarantee the termination of the series, and we did not have a real bijection between the multipartite graph obtained at termination and the vertices of $\cur{L}$.

\subsubsection*{Related works}

The weak-factor operator we study here operates on multipartite graphs and is defined using the bicliques between the upper level and the rest of the multipartite graph. For graphs, closely related operators have been defined using the cliques or the bicliques of the graph, and many works addressed the question of convergence of the series obtained by iteratively applying these operators to an input graph. There exists several definitions of convergence in the literature. The notion of termination we use here for the multipartite graph series is somehow equivalent to the convergence notion used in~\cite{BS95} in the context of graph series, and is a particular case of convergence of the definition used in~\cite{P92}.

For the well-known clique graph operator (see \cite{S03} for a survey) the question of convergence has received a lot of attention \cite{P92,BS95}. Most of the efforts focussed on obtaining convergence results, or divergence results, for some particular graphs or graph classes \cite{LMMNP04,LNP04,LPV08,LSS10}. Similar questions have been addressed recently for the biclique graph operator \cite{GM09,GM13}, which also operates on graphs\footnote{Note that the bicliques we use in the definition of the weak-factor operator are bicliques in a bipartite graph and not in a general graph as in the definition of the biclique graph operator.} but using bicliques instead of cliques. It is worth noticing that the clique graph and the biclique graph are defined as intersection graphs while this is not the case of the operators we study in this paper.
Let us mention, that another closely related graph operator called edge-clique-graph operator has been studied (see e.g. \cite{CKMTS91,C03}) but, to the best of our knowledge, the question of the convergence of its iterated series has not been investigated.

It must be clear that none of these three operators, clique graphs, biclique graphs and edge-clique graphs, which are defined on graphs, is equivalent to one of the multipartite-graph operators we consider here. And the convergence or divergence results obtained previously for these graph operators do not imply the termination and non-termination results we present here. 

Moreover, it is worth noticing that the question we address in this paper is orthogonal, and complementary, to the one addressed in all the previously cited works. Indeed, we do not intend to characterise the graphs for which the iteration of the weak-factor operator terminates or does not terminate. Instead, we aim at determining minimal constraints that can be imposed to this operator in order to obtain termination for all graphs.

Finally, we note that recently \cite{E10} showed the interest of clique graphs to study communities in complex networks. However, their approach and results are not equivalent to ours. In particular, they do not consider the series obtained by iterating the operator, which is our main concern here in the case of the weak-factor operator.

\subsubsection*{Outline of the paper}

In Section~\ref{sec:motiv}, we detail further the context where the practical motivation of our work comes from. Then, Section~\ref{sec:prel} gives a few notations and basic definitions, useful in the whole paper, including the definition of a fundamental notion, the {\em factorisation}, which plays a key role in the following.
In Section~\ref{sec-wfs}, we formally define the weak-factor operator and a natural variation of this operator, called the factor operator, both of which giving rise to some infinite series. In Section~\ref{sec-cfs} we propose a deeper refinement of the operator, called the clean-factor operator, for which we prove termination for all graphs and give a structural characterisation of the multipartite graph obtained at termination. Finally, in Section~\ref{sec-pract}, we address the question of efficiently computing and storing the representation provided by the clean-factor series.

\subsection{Motivation}\label{sec:motiv}

It is worth to mention that we did not come to the study of the termination of the weak-factor series only for theoretic motivations: this question is of key interest in complex network modelling. Complex networks are those graphs encountered in practice in various domains such as computer science, biology, social sciences and others. In the last decade, they were shown to share some nontrivial common properties \cite{watts98collective,albert02statistical}, independently from the context they come from. A lot of efforts have been done to design models able to capture these properties while staying general enough. One of the difficulty of the domain is to encompass in a same model the two major properties of these networks, namely their heterogeneous degree distribution and their high local density (clustering coefficient, see \cite{barrat2010} for a formal definition).

Among the most promising approaches, \cite{ipl04guillaume,physicaa06guillaume} propose to model complex networks based on the properties of their vertex-clique-incidence bipartite graph. Their idea is to use prescribed-degree-graph generation, which is a powerful and well understood technique since the works of \cite{BC78,molloy95critical}, for the vertex-clique-incidence bipartite graph instead of the graph itself. In other words, they advocate for the generation of complex networks by their cliques rather than by their edges. They show that, in this way, one obtains graphs having a high local density (thanks to the clique structure) and a heterogeneous degree distribution that is controlled by the degrees of the vertices in the vertex-clique-incidence bipartite graph. However, the bipartite model suffers from a severe limitation: when generating the edges of the bipartite graph at random, the obtained neighbourhoods of the upper vertices intersect only on one (or zero) vertex with a very high probability (see \cite{ipl04guillaume,physicaa06guillaume}). This is not the case in real world networks, where most of the maximal cliques have non-simple overlaps with some others (i.e. overlaps of cardinality at least two). Thus, even though it gives the desired properties concerning degree distribution and local density, the bipartite model results in graphs having a caricaturistic structure.

The weak-factor graph (see Section~\ref{sec-wfs}) was introduced in \cite{LPCN10} in order to correct this drawback. The idea is to define an object that encodes the non-simple intersections of maximal cliques of a graph $G$ by the neighbourhoods of vertices in some other suitably defined graph, so that such objects can be randomly generated using the prescribed-degree generation technique of \cite{BC78,molloy95critical}. In order to define such an encoding of a graph $G$, we can proceed as follows. We start from the vertex-clique-incidence bipartite graph $B(G)=(V_0,V_1,E)$ of $G$ and we create a new level $V_2$ where each vertex $x$ corresponds to a non-simple maximal biclique $B$ of $B(G)$. Then, we can delete the edges of $B$ as they are now encoded by the presence of $x$. Doing so simultaneously for all non-simple maximal bicliques of $B(G)$ gives a tripartite graph in which the neighbourhoods on $V_0$ of vertices at level $V_1$ have no non-simple intersections anymore. Then, we can iteratively repeat the operation by considering, at each stage of the process, the maximal bicliques between the vertices on the uppermost level and the rest of the vertices of the multipartite graph, until the process hopefully terminates. In this case, we obtain a multipartite graph\footnote{Note that this multipartite graph is an encoding of the original graph, as the factorising operation is reversible.} without any non-simple intersection of neighbourhoods. We can therefore generate similar structures at random using the prescribed-degree generation method without bumping into the problem raised by \cite{ipl04guillaume,physicaa06guillaume}. Of course, in order to obtain a multipartite graph that has no non-simple neighbourhood intersections, it is mandatory that the iterative factorising process terminates. This is the reason why we came to study the termination of the weak-factor series.

Note that, opposite to the process described above, in the definition of the weak-factor operator we do not delete edges of the bicliques involved in one factorisation step. This has no impact on the set of nodes created in the rest of the process, as those edges are not involved in further factorisation steps. On the other hand, keeping those edges helps to describe the structure of the graphs of the weak-factor series (Definition~\ref{def:factorisation} below) and this is the reason why we keep them in the rest of the paper.

\subsection{Notations and preliminary definitions}\label{sec:prel}

All graphs considered here are finite, undirected and simple (no loops and no multiple edges). A graph $G$ having vertex set $V$ and edge set $E$ will be denoted by $G=(V,E)$. We also denote by $V(G)$ the vertex set of $G$. The edge between vertices $x$ and $y$ will be indifferently denoted by $xy$ or $yx$. $\cur{K}(G)$ denotes the set of maximal cliques of a graph $G$, and $N(x)$ the neighbourhood of a vertex $x$ in $G$.

An ordered $k$-partition $\cur{P}$ of a set $V$ is a partition of $V$ into $k$ parts (non empty and pairwise disjoint, from the classical definition of partition) which are numbered from $0$ to $k-1$. It is denoted as a $k$-tuple: $\cur{P}=(V_0,\ldots,V_{k-1})$. In this paper, a $k$-partite graph is always given together with a partition of its vertices as in the following definition.

\begin{definition}[$k$-partite graph] A \emph{$k$-partite graph} is a couple $(G,\cur{P})$ where $G=(V,E)$ is a graph and $\cur{P}=(V_0,\ldots,V_{k-1})$ is an ordered $k$-partition of its vertex set $V$ such that all edges of $G$ are between vertices in different parts of $\cur{P}$. It is denoted by $G=(V_0,\ldots,V_{k-1},E)$.\\
\end{definition}

A \emph{multipartite} graph is a $k$-partite graph with $k\geq 2$. For a $k$-partite graph $G=(V_0,\ldots,V_{k-1},E)$, the vertices of $V_i$, for any $i$, are called the {\em $i$-th level} of $G$, and the vertices of $V_{k-1}$ are called its \emph{upper vertices}. We denote by $N_i(x)$, where $0\leq i\leq k-1$, the set of neighbours of $x$ at level $i$: $N_i(x)=N(x)\cap V_i$. 
A \emph{biclique} of a graph is a set of vertices of the graph inducing a complete bipartite graph.
We denote by $B(G)$ the vertex-clique-incidence bipartite graph of $G=(V,E)$: $B(G)=(V,\cur{K}(G),E')$ where $E' = \{vc \ |\ c \in \cur{K}(G), \ v \in c\}$.
A \emph{non-simple biclique} of a bipartite graph is a biclique having at least two vertices in the upper level and at least two vertices in the bottom level.
Two sets have a \emph{non-simple intersection} if they share at least two elements.
In the whole paper, we denote $\cur{L}$ the inclusion order of the non-simple intersections of maximal cliques of a graph $G$ (there will be no confusion on the graph $G$ referred to when we use this notation).

For two non-negative integers $a,b\in\mathbb{N}$, we use the notation $\int{a,b}$ for the set $\set{p\in\mathbb{N}\ |\ a\leq p\leq b}$, with the convention $\int{a,b}=\varnothing$ if $a>b$.

In the sequel, an operation will play a key role, we name it \emph{factorisation} and define it generically as follows.

\begin{definition}[factorisation with respect to $V'_k(G)$]\label{def:factorisation}
Given a $k$-partite graph $G=(V_0,\ldots,V_{k-1},E)$ with $k\geq 2$ and a set $V'_k(G)$ of subsets of $V(G)$, we define the factorisation of $G$ with respect to $V'_k(G)$ as the $(k+1)$-partite graph $G'=(V_0,\ldots,V_k,E\cup E_+)$ where:
\begin{itemize}
\item $V_k$ is the set of maximal (with respect to inclusion) elements of $V'_{k}(G)$,
\item $E_+=\set{Xy\ |\ X\in V_{k} \text{ and } y\in X}$.
\end{itemize}
When $V_k\not=\emptyset$, the factorisation is said to be \emph{effective}.
\end{definition}

Provided that the set $V'_k(G)$ is properly defined for all multipartite graphs $G$, such a factorisation operation defines a multipartite graph operator, the iteration of which gives rise to a series of multipartite graphs as defined below.

\begin{definition}[series associated to a factorisation operation]\label{def:series}
Given a factorisation operation that associates any $k$-partite graph $G=(V_0,\ldots,V_{k-1},E)$ with $k\geq 2$ to a (unique) $k+1$-partite graph $G'$
, we define the series of multipartite graphs $(G_i)_{i\geq 1}$, associated to this factorisation operation and generated by a graph $G_0=(V_0,E_0)$, by: $G_1=B(G_0)$ is the vertex-clique-incidence bipartite graph of $G_0$ (in which the cliques are on the upper level of $B(G_0)$) and, for all $i\geq 1$, $G_{i+1}= G'_{i}$ when the factorisation of $G_i$ is effective, and $G_{i+1}$ is undefined otherwise.
\end{definition}

\begin{definition}[termination of the series]\label{def:conv}
We say that the series $(G_i)_{1\leq i\leq n}$ associated to some factorisation operation \emph{terminates} iff for some $i\geq 1$ the factorisation is not effective, then all subsequent graphs of the series are undefined and the series reduces to a finite sequence.
\end{definition}

In the rest of the paper, we will refine the notion of factorisation by using different sets $V'_k(G)$ on which is based the factorisation operation. And we will study termination of the graph series resulting from each of these refinements. As, in the following, the graph $G$ referred to is always clear from the context, we denote $V'_k$ instead of $V'_k(G)$. But all the sets $V'_k$ we define still depend on the graph considered.

\section{Weak-factor series and factor series}\label{sec-wfs}

\subsection{Weak-factor series}

Thanks to the generic notion of factorisation, we will now formally define the weak-factor operation we introduced above.

\begin{definition}[$V^+_{k}$ and weak-factor graph]\label{weakfactor}
Given a $k$-partite graph $G=(V_0,\ldots,V_{k-1},E)$ with $k\geq 2$, we define the set $V^+_{k}$ as:
$$V^+_{k}=\set{\set{x_1,\ldots,x_l}\cup\bigcap_{1\leq i\leq l} N(x_i)\ |\ l \ge 2,\ \forall i\in\int{1,l}, x_i\in V_{k-1} \text{ and } |\bigcap_{1\leq i\leq l} N(x_i)|\geq 2}.$$
The \emph{weak-factor graph} $G^+$ of $G$ is the factorisation of $G$ with respect to $V^+_{k}$.
\end{definition}


\begin{figure}[!h]
\centering
\includegraphics[scale=0.5]{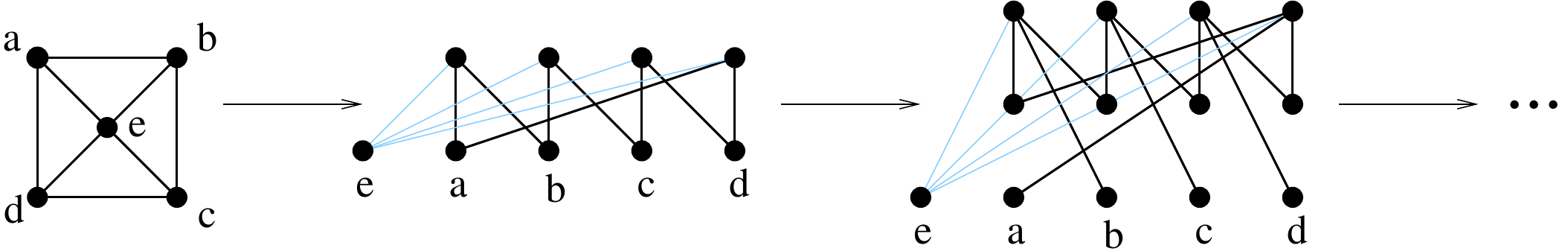}
\caption{An example graph for which the weak-factor series is infinite. From left to right: the input graph $G$, its vertex-clique-incidence bipartite graph $B(G)$, and the tripartite graph $B(G)^+$ of its weak-factor series (note that for sake of readability of the drawing, the edges between levels $V_1$ and $V_0$ have been removed in $B(G)^+$). The shaded edges are the ones involving vertex $e$, which plays a special role: all the vertices of the upper level of the multipartite graph are linked to $e$. The structure of edges between vertices of $V_2$ and vertices of $V_1\cup\set{e}$ in $B(G)^+$ is identical to the one between levels $V_1$ and $V_0$ in $B(G)$, revealing that the series will not terminate.}
\label{fig-infinite}
\end{figure}

Figure~\ref{fig-multipartite} gives an illustration for this definition. In this case, the weak-factor series is finite. However, it is not difficult to find examples of graphs which lead to infinite weak-factor series. Figure~\ref{fig-infinite} provides such an example for which the structure of the upper level is infinitely reproduced on the further levels. Intuitively, this is due to the fact that a vertex may be the base for an infinite number of factorisation steps (vertex $e$ in the example of Figure~\ref{fig-infinite}). The aim of the next sections is to avoid this case by using a more restrictive definition of factorisation.

\subsection{Factor series}

In this section, we examine a first restriction of the operator, called \emph{factor graph}, that forbids the repeated use of the same vertex to produce infinitely many factorisations, which is the phenomenon responsible for the non termination of the series for the example of Figure~\ref{fig-infinite}.

\begin{definition}[$V^\circ_{k}$ and factor graph]\label{factor}
Given a $k$-partite graph $G=(V_0,\ldots,V_{k-1},E)$ with $k\geq 2$, we define the set $V^\circ_{k}$ as:
$$V^\circ_{k}\ =\ \set{X\in V^+_k \text{ such that } |\bigcap_{y\in X\cap V_{k-1}} N_{k-2}(y)| \geq 2}.$$
The \emph{factor graph} $G^\circ$ of $G$ is the factorisation of $G$ with respect to $V^\circ_{k}$.
\end{definition}

This new definition results from the restriction of the weak-factor definition by considering only sets $X\in V^+_k$ such that the vertices of $X\cap V_{k-1}$ have at least two common neighbours at level $k-2$. The reason is that, in this way, a vertex cannot contribute to more than two factorising steps: once when it is on the upper level of the multipartite graph, once when it is on the level just below. Indeed, even if the vertices of levels lower than the two upper levels (i.e. $V_k$ and $V_{k-1}$) may be involved in a factorisation step, they are not responsible for the creation of a new vertex. Such a creation depends only on the edges between the two upper levels of the multipartite graph.

Finding examples of input graphs that generate infinite factor series is not straightforward. In particular, one natural candidate that one could have in mind, namely the graph whose vertex-clique-incidence bipartite graph is the \emph{anti-matching} on $2n$ vertices, which generates an infinite series for the clique-graph operator~\cite{Neu78}, actually gives rise to a finite series for the factor operator. The anti-matching is the bipartite complement of a perfect matching between the $n$ upper vertices and the $n$ bottom vertices (also known as the \emph{octahedron} in clique-graph theory and as the \emph{crown} in poset theory). The anti-matching is known for being the bipartite graph on $2n$ vertices that has the maximum number of bicliques. Regarding the factor series, it implies that there is a combinatorial explosion of the number of vertices on the first next levels. Despite of this, one can check that the series of the anti-matching on $2n$ vertices terminates.

Nevertheless, \cite{CPT15} recently provided an example of a graph that gives rise to an infinite factor series. This is the reason why in the next section, we constraint further the factor operation: we do not only require that the neighbourhoods of vertices at level $V_{k-1}$ involved in the creation of a new vertex at level $V_k$ share at least two vertices on level $V_{k-2}$ but we also require that those vertices have the same neighbourhood at level $V_{k-3}$ (see Definition~\ref{CleanFactor} of the clean-factor graph). This supplementary condition is not only a technical condition used to guarantee termination: we will show that the graph on which terminates the clean-factor series is a fundamental combinatorial object.

\section{Clean-factor series}\label{sec-cfs}

In the previous section, we studied two series of multipartite graphs based on two different factorisation operations, both of them giving rise to some infinite series. In this section, we introduce a more constrained refinement of these two factorisation operations, which we call the \emph{clean-factor} operator, for which we prove that the associated series always terminates. One interesting point of our solution is that the constraints introduced in order to guarantee termination are light: they apply on only 3 levels of the multipartite graph. In addition, the multipartite graph obtained at termination has a very deep combinatorial meaning: its vertices are the chains of the inclusion order $\cur{L}$ of the non-simple intersections of maximal cliques of the input graph $G$.

We now give the formal definition of the clean-factor graph of a multipartite graph: the general factorisation step is the case where $k\geq 5$, the construction of levels $V_2, V_3$ and $V_4$ are subject to particular conditions. It should be clear that these particular conditions may be simplified while preserving termination. But on the other hand, those exact conditions are necessary in order to obtain the bijection with the chains of order $\cur{L}$.

\begin{definition}[$V^*_{k}$ and clean-factor graph]\label{CleanFactor}
Given a $k$-partite graph $G=(V_0,\ldots,V_{k-1},E)$ with $k\geq 2$, we define the set $V^*_{k}$ as:
\begin{itemize}
\item If $k\geq 5$, $V^*_{k}\ =\ \set{X\in V^+_k\ |\ |\bigcap_{x\in X\cap V_{k-1}}N_{k-2}(x)|\geq 2 \text{ and }\forall x,y\in X\cap V_{k-1}, N_{k-3}(x)=N_{k-3}(y) \text{ and } |\bigcap_{x\in X\cap V_{k-1}}N_1(x)|\geq 2}$.

\item If $k=4$, $V^*_{4}\ =\ \set{X\in V^+_4\ |\ |\bigcap_{x\in X\cap V_{3}}N_2(x)|\geq 2 \text{ and } |\bigcap_{x\in X\cap V_{3}}N_1(x)|\geq 2 \text{ and }\forall x,y\in X\cap V_{3}, N_0(x) = N_0(y)}$.

\item If $k=3$, $V^*_3=\ \set{X\in V^+_3\ |\ |\bigcap_{x\in X\cap V_2}N_1(x)|\geq 2 \text{ and } |\bigcap_{x\in X\cap V_2}N_0(x)|\geq 2}$.

\item If $k=2$, $V^*_2=V^+_2$.


\end{itemize}

The \emph{clean-factor graph} $G^*$ of $G$ is the factorisation of $G$ with respect to $V^*_{k}$.
\end{definition}

The rest of this section is devoted to proving the termination of the clean-factor series $(G_i)_{i\geq 1}$ generated by any graph $G$ (Theorem~\ref{CFSstop}) and the bijection between vertices of level $V_i$ of the series, with $i\geq 2$, and the chains of length $i-2$ of $\cur{L}$ (Theorem~\ref{ThCharSeq}).
We start by proving Theorem~\ref{ThCharSeq} since Theorem~\ref{CFSstop} will be obtained as a direct corollary from it.

Theorem~\ref{ThCharSeq} gives a characterisation of $V_i, i\geq 2$ by associating to each of its nodes a chain of length $i-2$ in order $\cur{L}$.
Formally, we associate to a node $x$ of $V_i$ a sequence $S(x)$ of subsets of $V(G)$ which are precisely the elements of $\cur{L}$ defining the chain associated to $x$. Before formally defining $S(x)$ (Definition~\ref{DefCharSeq}) and stating Theorem~\ref{ThCharSeq}, we need to establish some basic definitions, notations and properties of the non-simple intersections of the maximal cliques of a graph.

\begin{definition}\label{DefOK}
We denote by $\cur{O}'$ the set of intersections of maximal cliques of $G$ (possibly only one clique or none), that is $\cur{O}'=\set{O\subseteq V(G)\ |\ \exists P_c\subseteq \cur{K}(G), O=\bigcap_{C\in P_c} C}$, using the convention that $\bigcap_{C\in\varnothing} C=V(G)$. And we denote by $\cur{O}$ the subset of $\cur{O}'$ formed by the elements that contain at least two vertices of $G$ and that are obtained as the intersection of at least two distinct maximal cliques of $G$, that is $\cur{O}=\set{O\in\cur{O}'\ |\ |O|\geq 2 \text{ and } \exists k\geq 2, \exists C_1,\ldots ,C_k\in \cur{K}(G), (\forall j,l\in\int{1,k}, j\neq l \Rightarrow C_j\neq C_l) \text{ and } O=\bigcap_{1\leq i\leq k} C_i}$.
\end{definition}

\begin{definition}\label{defKC}
For any subset $A\subseteq V(G)$ of vertices of $G$, we denote by $K(A)$ the set of maximal cliques of $G$ containing $A$, that is $K(A)=\set{C\in \cur{K}(G)\ |\ A\subseteq C}$.
And we denote by $\cur{C}$ the family of subsets of $\cur{K}(G)$ defined by $\cur{C}=\set{K(O)\ |\ O\in\cur{O}'}$.
\end{definition}


Note that the set $A$ defining $K(A)$ is not unique: there may exist $A'\neq A$ such that $K(A')=K(A)$. This is the reason why we now need to state some basic properties of sets $K(A)$ that we will use in the following.

\begin{remark}\label{Kincl}
For any subsets $A,B\subseteq V(G)$, if $A\subseteq B$ then $K(B)\subseteq K(A)$. And for any subsets $A\subseteq V(G)$ and $O\in\cur{O}'$, if $K(O)\subseteq K(A)$ then $A\subseteq O$.
\end{remark}

\begin{proof}
The first part of the remark is self-evident. For the second part, note that, by definition of $\cur{O}'$, $O=\bigcap_{C\in K(O)} C$. And on the other hand, we have $A\subseteq \bigcap_{C\in K(A)} C\subseteq \bigcap_{C\in K(O)} C = O$.
\end{proof}

\begin{remark}\label{PteOK}
For any $A,B\subseteq V(G)$, $K(A)\cap K(B)=K(A\cup B)$.
Conversely, if $A_1,\ldots,A_n\subseteq V(G)$, with $n\geq 2$, and if $O\in\cur{O}'$ and if $\bigcap_{1\leq i\leq n} K(A_i)=K(O)$, then $\bigcup_{1\leq i\leq n} A_i\subseteq O$.
\end{remark}

\begin{proof}
Let $A,B\in V(G)$.
The cliques in $K(A)\cap K(B)$ are exactly the cliques that contain both $A$ and $B$, {\em i.e.} the cliques that contain $A\cup B$. Therefore $K(A)\cap K(B)=K(A\cup B)$.

Let $A_1,\ldots,A_n\subseteq V(G)$, with $n\geq 2$, and let $O\in\cur{O}'$ such that $\bigcap_{1\leq i\leq n} K(A_i)=K(O)$. From what precedes, $\bigcap_{1\leq i\leq n} K(A_i)=K(\bigcup_{1\leq i\leq n} A_i)$. Consequently, we have $K(O)\subseteq K(\bigcup_{1\leq i\leq n} A_i)$.
And since $O\in\cur{O}'$, from Remark~\ref{Kincl}, we have $\bigcup_{1\leq i\leq n} A_i\subseteq O$.
\end{proof}

\begin{lemma}
$\cur{O}'$ and $\cur{C}$ are closed under intersection.
\end{lemma}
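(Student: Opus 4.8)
The plan is to show separately that $\cur{O}'$ is closed under intersection and that $\cur{C}$ is closed under intersection, the latter following almost immediately from Remark~\ref{PteOK}.

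\textbf{Closure of $\cur{C}$.} Let $K(O_1),K(O_2)\in\cur{C}$ with $O_1,O_2\in\cur{O}'$. By the first part of Remark~\ref{PteOK}, $K(O_1)\cap K(O_2)=K(O_1\cup O_2)$. So it suffices to exhibit some element of $\cur{O}'$ whose associated set of cliques is exactly $K(O_1\cup O_2)$. The natural candidate is $O:=\bigcap_{C\in K(O_1\cup O_2)}C$, which is an intersection of maximal cliques and hence lies in $\cur{O}'$ by definition. I then need $K(O)=K(O_1\cup O_2)$: the inclusion $K(O_1\cup O_2)\subseteq K(O)$ holds because every clique containing $O_1\cup O_2$ contains the intersection $O$ of all such cliques; the reverse inclusion follows since $O_1\cup O_2\subseteq O$ (as $O$ is an intersection of cliques each containing $O_1\cup O_2$) together with the monotonicity $K(O)\subseteq K(O_1\cup O_2)$ from Remark~\ref{Kincl}. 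Hence $K(O_1)\cap K(O_2)=K(O)\in\cur{C}$.

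\textbf{Closure of $\cur{O}'$.} Let $O_1,O_2\in\cur{O}'$, say $O_1=\bigcap_{C\in P_1}C$ and $O_2=\bigcap_{C\in P_2}C$ with $P_1,P_2\subseteq\cur{K}(G)$. I would like to conclude immediately that $O_1\cap O_2=\bigcap_{C\in P_1\cup P_2}C\in\cur{O}'$, but one must be careful because the convention $\bigcap_{C\in\varnothing}C=V(G)$ means $V(G)\in\cur{O}'$, and an intersection with $V(G)$ changes nothing, so the identity still holds in that degenerate case. In all cases $P_1\cup P_2\subseteq\cur{K}(G)$ and $\bigcap_{C\in P_1\cup P_2}C=\left(\bigcap_{C\in P_1}C\right)\cap\left(\bigcap_{C\in P_2}C\right)=O_1\cap O_2$, so $O_1\cap O_2\in\cur{O}'$. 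Alternatively, and perhaps more in the spirit of the $K$-machinery, one can note $O_1\cap O_2$ contains $O_1\cup O_2$? No — rather, use $K(O_1\cap O_2)\supseteq K(O_1)\cap K(O_2)=K(O_1\cup O_2)$ and set $O=\bigcap_{C\in K(O_1)\cap K(O_2)}C$; but the direct computation above is cleaner.

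\textbf{Main obstacle.} There is no serious obstacle; the only subtlety is bookkeeping around the empty-intersection convention $\bigcap_{C\in\varnothing}C=V(G)$, which makes $V(G)$ a (maximum) element of $\cur{O}'$ and thus the top of the lattice, so one should make sure the stated closure is under binary (equivalently finite non-empty) intersection and that the degenerate cases $P_i=\varnothing$ do not break the union-of-index-sets argument — they do not. It is worth remarking, for use later in the paper, that these two lattices are anti-isomorphic via $O\mapsto K(O)$ on $\cur{O}'$ (with inverse $P\mapsto\bigcap_{C\in P}C$ on $\cur{C}$), which is exactly what the two closure statements, combined with Remarks~\ref{Kincl} and~\ref{PteOK}, encode.
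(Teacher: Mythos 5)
Your proof is correct. For $\cur{O}'$ you argue, as the paper does, that closure is immediate from the definition by taking the union of the two index sets (your care with the $\bigcap_{C\in\varnothing}C=V(G)$ convention is harmless and sound). For $\cur{C}$ the two proofs share the same skeleton --- reduce via $K(O_1)\cap K(O_2)=K(O_1\cup O_2)$ and then exhibit some $O\in\cur{O}'$ with $K(O)=K(O_1\cup O_2)$ --- but they produce and verify the witness differently. The paper takes $O$ to be a minimal element of $\cur{O}'$ containing $O_1\cup O_2$ and proves $K(O_1\cup O_2)\subseteq K(O)$ by a minimality argument that itself invokes the closure of $\cur{O}'$ under intersection (for any $C\in K(O_1\cup O_2)$, $O\cap C\in\cur{O}'$ contains the union, so $O\cap C=O$ and $O\subseteq C$). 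You instead write the witness down explicitly as $O=\bigcap_{C\in K(O_1\cup O_2)}C$ and check both inclusions by direct computation; this is the same set as the paper's minimal element, but your verification is self-contained and does not route through the closure of $\cur{O}'$, which makes the two halves of the lemma independent. The only cosmetic difference is that you treat binary intersections where the paper handles $k\geq 2$ sets at once; binary closure gives finite closure by induction, so nothing is lost. Your closing remark about the anti-isomorphism $O\mapsto K(O)$ is accurate (it is exactly the identity $O=\bigcap_{C\in K(O)}C$ used in Remark~\ref{Kincl}) but is not needed for the lemma.
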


\begin{proof}
The fact that $\cur{O}'$ is closed under intersection is clear from the definition. Let us show that $\cur{C}$ is closed under intersection. Let $k\geq 2$ and let $O_1,\ldots ,O_k\in \cur{O}'$. We prove that $\bigcap_{1\leq i\leq k} K(O_i) \in \cur{C}$. For that purpose, consider a set $O\in \cur{O}'$ such that $O\supseteq \bigcup_{1\leq i\leq k} O_i$ and which is minimal under inclusion. We will show that $\bigcap_{1\leq i\leq k} K(O_i)=K(O)$. Since $O\supseteq \bigcup_{1\leq i\leq k} O_i$, we have $K(O)\subseteq K(\bigcup_{1\leq i\leq k} O_i)=\bigcap_{1\leq i\leq k} K(O_i)$, from Remark~\ref{PteOK}. For the converse inclusion, consider $C\in K(\bigcup_{1\leq i\leq k} O_i)$. By definition, $C\in\cur{O}'$ and $\bigcup_{1\leq i\leq k} O_i\subseteq C$. Since $\bigcup_{1\leq i\leq k} O_i\subseteq O$, we also have $\bigcup_{1\leq i\leq k} O_i\subseteq O\cap C$. And since $O\cap C\in \cur{O}'$ (as $\cur{O}'$ is closed under intersection), the minimality of $O$ implies that $O\cap C=O$, that is $O\subseteq C$. Thus, $C\in K(O)$ and it follows that $K(\bigcup_{1\leq i\leq k} O_i)\subseteq K(O)$, which completes the proof.
\end{proof}

The following lemma is the first step toward the bijection theorem (Theorem~\ref{ThCharSeq}). It establishes the bijection between vertices of $V_2$ and the chains of length $0$ of $\cur{L}$. We will use it in the initialising step of the recursion of the proof of Theorem~\ref{ThCharSeq}.

\begin{lemma}\label{Levels012}
In the clean-factor series, $V_0= V(G), V_1 = \cur{K}(G)$ and $V_2 = \cur{O}$ in the sense that the map $\phi$ defined by $x \mapsto N_0(x)$ is a bijection from $V_2$ to $\cur{O}$. Moreover, $\forall x\in V_2, N_1(x)=K(N_0(x))$. 
\end{lemma}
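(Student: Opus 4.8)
The plan is to unwind the definitions of the clean-factor series at the very bottom and match them against $\cur{O}$, $\cur{K}(G)$ and the map $K(\cdot)$. First I would observe that, by Definition~\ref{def:series}, $G_1 = B(G_0)$ is the vertex-clique-incidence bipartite graph, so by the definition of $B(G)$ we immediately get $V_0 = V(G)$ and $V_1 = \cur{K}(G)$, with the edge set $E' = \{vc \mid c\in\cur{K}(G),\ v\in c\}$; in particular $N_0(c) = c$ for every $c\in V_1$. Then $V_2$ is obtained by factorising $G_1$ with respect to $V^*_2 = V^+_2$ (the $k=2$ case of Definition~\ref{CleanFactor}), so $V_2$ is the set of maximal elements of $V^+_2$. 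By Definition~\ref{weakfactor}, applied with $k=2$ and $V_{k-1}=V_1=\cur{K}(G)$, the elements of $V^+_2$ are exactly the sets $\{C_1,\ldots,C_l\}\cup\bigcap_{1\le i\le l}N(C_i)$ with $l\ge 2$, the $C_i\in\cur{K}(G)$, and $|\bigcap_{1\le i\le l}N(C_i)|\ge 2$; since $N(C_i)=N_0(C_i)=C_i$, this is $\{C_1,\ldots,C_l\}\cup\bigcap_{1\le i\le l}C_i$, where the intersection has size at least $2$.

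The core of the argument is then to set up the bijection $\phi\colon x\mapsto N_0(x)$ and show it lands in $\cur{O}$ and is a bijection. For a vertex $x\in V_2$ corresponding to a maximal element $X=\{C_1,\ldots,C_l\}\cup O$ of $V^+_2$ (where $O=\bigcap_i C_i$), its neighbourhood in $G_2$ splits as $N_1(x)=\{C_1,\ldots,C_l\}$ (the new edges $E_+$ go to the elements of $X$, and the elements of $X$ lying in $V_1$ are the $C_i$) and $N_0(x)=O$ (the elements of $X$ lying in $V_0$); here I should note carefully that, because $|O|\ge 2$ and each $C_i$ is a maximal clique, no vertex of $G$ equals a maximal clique, so the partition of $X$ into its $V_0$-part and $V_1$-part is unambiguous. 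Now $O$ is an intersection of at least two distinct maximal cliques with $|O|\ge 2$, so $O\in\cur{O}$; hence $\phi$ maps $V_2$ into $\cur{O}$. For surjectivity: given $O\in\cur{O}$, take $P_c=K(O)$, the set of all maximal cliques containing $O$; then $\bigcap_{C\in K(O)}C = O$ by the definition of $\cur{O}'$ (and $|K(O)|\ge 2$ since $O$ is the intersection of two distinct maximal cliques, both of which contain $O$, so both lie in $K(O)$), and $|O|\ge 2$, so $X_O:=K(O)\cup O\in V^+_2$. I must check $X_O$ is maximal in $V^+_2$: if $X_O\subseteq X'=\{C'_1,\ldots\}\cup\bigcap_j C'_j$ for some element $X'$ of $V^+_2$, then $O\subseteq\bigcap_j C'_j$ forces each $C'_j\in K(O)$, so the clique-part of $X'$ is contained in $K(O)$, whence $\bigcap_j C'_j\supseteq O$; combined with $\bigcap_j C'_j\subseteq O$ (which holds because $K(O)$ already witnesses $O$ as an intersection and $\{C'_j\}\subseteq K(O)$ gives $\bigcap_j C'_j \supseteq \bigcap_{C\in K(O)}C = O$, wait — I need the reverse) one sees $\bigcap_j C'_j = O$ and $\{C'_j\} = K(O)$, so $X'=X_O$. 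So $X_O$ is maximal, hence corresponds to a vertex of $V_2$ mapping to $O$.

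For injectivity I would argue that $x\in V_2$ is determined by $N_0(x)$: since $x$ corresponds to a maximal $X=\{C_1,\ldots,C_l\}\cup O$ with $O=\bigcap_i C_i$ and $\{C_i\}\subseteq K(O)$, maximality forces $\{C_i\}=K(O)$ (otherwise we could enlarge the clique-part to $K(O)$ while keeping the intersection $\supseteq O$, contradicting maximality, after checking the intersection stays exactly $O$ — this uses that all cliques in $K(O)$ contain $O$ so the intersection over $K(O)$ equals $O$), so $X = K(O)\cup O = K(N_0(x))\cup N_0(x)$ is a function of $N_0(x)$. This simultaneously gives the last claim $N_1(x)=\{C_1,\ldots,C_l\}=K(O)=K(N_0(x))$. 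The main obstacle I anticipate is the bookkeeping around maximality: one must be careful that enlarging the set of cliques $\{C_i\}$ to $K(O)$ does not shrink the intersection below $O$, and conversely that no element of $V^+_2$ with intersection strictly containing a given $O$ can sit above $X_O$ — both hinge on the simple but essential fact that $\bigcap_{C\in K(O)}C=O$ for $O\in\cur{O}'$, i.e. $O$ is a fixed point of the closure $A\mapsto\bigcap_{C\in K(A)}C$, which is exactly what Remark~\ref{Kincl} and the definition of $\cur{O}'$ provide. Everything else is routine definition-chasing.
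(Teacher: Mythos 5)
Your proposal is correct and follows essentially the same route as the paper: identify $V_0,V_1$ from the definition of $B(G)$, note that elements of $V^+_2$ have the form $\{C_1,\ldots,C_l\}\cup\bigcap_i C_i$, and establish the bijection with $\cur{O}$ via the candidate $K(O)\cup O$ together with a maximality argument (the paper packages injectivity slightly differently, by first proving $N_1(x)=K(N_0(x))$ and observing that $x$ is recovered from $N_1(x)\cup N_0(x)$, but the substance is identical). The one step you flagged with ``wait --- I need the reverse'' does close immediately: from $X_O\subseteq X'$ restricted to level $V_1$ you get $K(O)\subseteq\{C'_j\}_j$, hence $\bigcap_j C'_j\subseteq\bigcap_{C\in K(O)}C=O$, which is the inclusion you were missing; combined with $O\subseteq\bigcap_j C'_j$ (from the $V_0$-part) this gives $\bigcap_j C'_j=O$ and $\{C'_j\}_j=K(O)$, so $X'=X_O$. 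With that line inserted the argument is complete.
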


\begin{proof}
Let us start with the second part of the lemma. Let $x\in V_2$. By definition of $V^+_2$, all the elements $y$ in $N_1(x)$ are such that $N_0(x)\subseteq N_0(y)$. Then, $y\in K(N_0(x))$, by identifying $V_1$ and $\cur{K}(G)$. On the other hand, the maximality of $x$ in $V^+_2$ implies that all $y\in K(N_0(x))$ belong to $N_1(x)$. Thus, $N_1(x)=K(N_0(x))$

Let us prove that the map $\phi$ is a bijection from $V_2$ to $\cur{O}$.
First, if $x \in V_2$ then by definition, $|N_0(x)|\geq 2$, $|N_1(x)|\geq 2$, and $N_0(x)=\bigcap_{y\in N_1(x)} N_0(y)$, hence $N_0(x)$ belongs to $\cur{O}$, and the map $\phi$ is well defined. 

Second, $\phi(x) = \phi(x')$ means $N_0(x) = N_0(x')$. But $N_1(x)$ is the set of all maximal cliques containing $N_0(x)$, and $N_1(x')$ is the same. Then, if $\phi(x) = \phi(x')$, we have $x=x'$: $\phi$ is injective.

We now prove that $\phi$ is surjective. Let $O$ be an element of $\cur{O}$, we show that the element $x = K(O)\cup\bigcap_{y\in K(O)} N_0(y)$ is an element of $V_2$ and $\phi(x)=O$. It is clear that $x \cap V_0 = O$, so $|x \cap V_0| \geq 2$. Since $O\in\cur{O}$, $|K(O)|\geq 2$, and we have $|x\cap V_1|\geq 2$. Then $x\in V^+_2$. Moreover, by definition, $K(O)$ is exactly the set of all maximal cliques containing $O$, then $x$ is maximal in $V^+_2$. It follows that $x$ is an element of $V_2$, and $\phi(x)=O$.
\end{proof}


We are now ready to give the definition of the sequence $S(x)$ that we associate to a vertex $x\in V_i$ with $i\geq 2$.

\begin{definition}[Characterising sequence $S(x)$]\label{DefCharSeq}
Let $G$ be a graph and let $(G_i)_{i\geq 1}$ be its clean-factor series.
The \emph{characterising sequence} $S(x)=(O_1(x),\ldots ,O_{k-1}(x))$ of a vertex $x\in V_k$, with $k\geq 2$, is defined by:
\begin{itemize}
\item $O_1(x)=N_0(x)$, and
\item for $k\geq 3$, $\forall j\in\int{2,k-1}, O_j(x)$ is the unique element of $\cur{O'}$ such that $K(O_j(x))=\bigcap_{y\in N_j(x)} N_1(y)$.
\end{itemize}
\end{definition}

Note that $O_j$ is properly defined. Indeed, from Lemma~\ref{Levels012}, $\forall y\in V_2, V_1(y)\in \cur{C}$. And since $\cur{C}$ is closed under intersection, a simple recursion shows that for all $i\geq 3$ and for all $y\in V_i $, $N_1(y)=\bigcap_{z\in V_{i-1}} N_1(z)\in \cur{C}$. Then, for any $j\geq 2$, $\bigcap_{y\in N_j(x)} N_1(y)$ is in $\cur{C}$ and there exists some $O_j$ in $\cur{O'}$ satisfying the condition. The fact that such an $O_j\in\cur{O}'$ is unique comes from the fact that for any set $O\in\cur{O}'$, we have $O=\bigcap_{C\in K(O)} C$. Then, if there exists some $O$ such that $K(O)=K(O_j)$, necessarily $O=O_j$. Consequently, $O_j$ is unique and properly defined.

We will often use the following remark in the proof of Theorem~\ref{ThCharSeq}.

\begin{remark}\label{V1}
For any $x\in V_k$, with $k\geq 2$, $K(O_{k-1}(x))=N_1(x)$.
\end{remark}

\begin{proof}
For $k=2$, the remark rewrites $K(O_1(x))=N_1(x)$. Since $O_1(x)=N_0(x)$ and since, from Lemma~\ref{Levels012}, $K(N_0(x))=N_1(x)$, then the result follows.
For $k>2$, the remark simply follows from the fact that $\bigcap_{y\in N_{k-1}(x)} N_1(y)=N_1(x)$.
\end{proof}

\begin{figure}[!h]
\centering
\input{g_and_l.tex_source}
\includegraphics[scale=0.49]{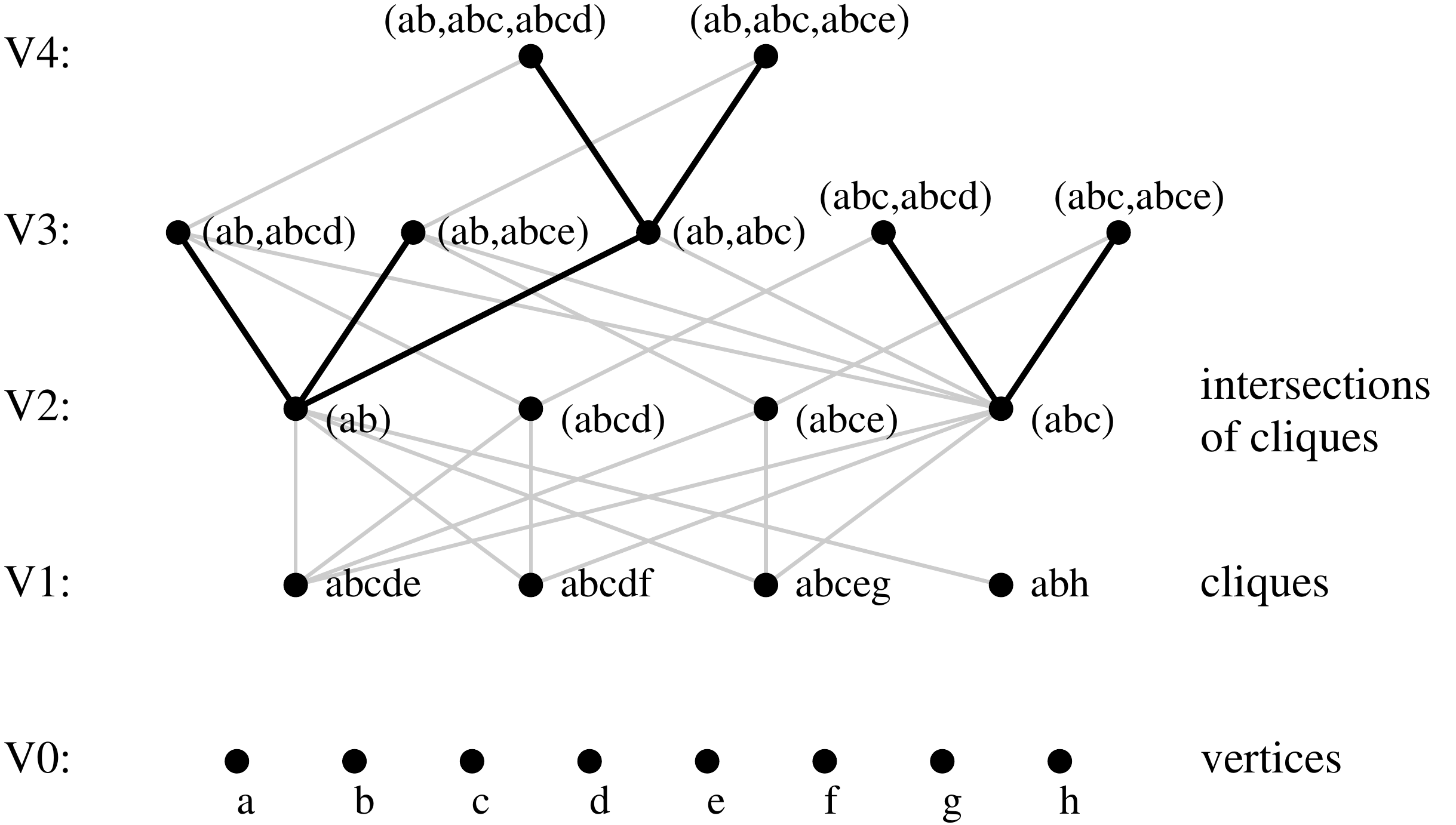}
\caption{Top left: a graph G. Top center: its maximal cliques. Top right: the inclusion order $\cur{L}$ of the non-simple intersections of maximal cliques of $G$. Bottom: the multipartite graph $M$ obtained at termination of the clean-factor series of $G$. $M$ has $5$ levels. Level $V_0$ is for the vertices of $G$ and level $V_1$ for the maximal cliques of $G$. The bijection between the vertices of $M$ and the chains of order $\cur{L}$ appears in the rest of the levels: each vertex $x$ of $M$ in levels $V_2$ to $V_4$ has been labelled with the corresponding chain of $\cur{L}$, i.e. its characterising sequence $S(x)$ (see Definition~\ref{DefCharSeq}). Level $V_2$ is for the non-simple intersections of maximal cliques of $G$ (i.e. the chains of $\cur{L}$ of length $0$), level $V_3$ is for the chains of $\cur{L}$ of length $1$, and level $V_4$ is for the chains of $\cur{L}$ of length $3$, which is precisely the height of $\cur{L}$. For sake of clarity, only a few links of $M$ have been drawn on the figure, namely all the links between two consecutive levels higher than level $V_2$. The black bold lines are for the link between a vertex $x$ at level $V_i$, for $i\geq 3$, and the unique vertex of $V_{i-1}$ whose characterising sequence is a prefix of the sequence of $x$ (of length $length(S(x))-1$, necessarily). In this way, we obtain a clearer representation of the bijection with the chains of $\cur{L}$, as one can clearly see the four prefix trees of the chains starting at a given element of $\cur{L}$: these prefix trees are rooted at level $V_2$ and two of them are reduced to one single node, labelled $(abcd)$ and $(abce)$.}
\label{fig-clean}
\end{figure}

We will now state the bijection theorem (Theorem~\ref{ThCharSeq}) which is our main combinatorial tool for proving the termination of the clean-factor series (Theorem~\ref{CFSstop}). Its proof is rather intricate, but it gives much more information than the termination of the series. By associating a sequence of sets to each vertex in levels greater than $V_2$ in the multipartite graph, we show that each such vertex corresponds to a chain of the inclusion order $\cur{L}$ of the non-simple intersections of maximal cliques of $G$.

\begin{theorem}[Bijection theorem]
\label{ThCharSeq}
Let $G$ be a graph and $(G_i)_{i\geq 2}$ its clean-factor series. For any $k\geq 2$, the map $\phi$ defined by $x \mapsto S(x)$ is a bijection from $V_k$ to $\set{(O_1,\ldots,O_{k-1})\in\cur{O}^{k-1}\ |\ O_1\subsetneq\ldots\subsetneq O_{k-1}}$ (see Figure~\ref{fig-clean}).
\end{theorem}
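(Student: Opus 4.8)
I would prove the theorem by induction on $k$, the base case $k=2$ being exactly Lemma~\ref{Levels012} (a ``length $0$'' chain of $\cur{L}$ is a single element of $\cur{O}$, and $S(x)=(N_0(x))$). The inductive step cannot be carried with the bijection alone, so I would strengthen the statement proved by induction to also pin down the neighbourhoods of a vertex in terms of its characterising sequence: for $x\in V_k$ one has $N_0(x)=O_1(x)$, $N_1(x)=K(O_{k-1}(x))$ (Remark~\ref{V1}) and, for every $j$ with $2\le j\le k-1$,
\[ N_j(x)=\set{\,w\in V_j\ \big|\ S(w)=(P_1,\ldots,P_{j-1}),\ P_i=O_i(x)\ \text{for}\ i\le j-2,\ O_{j-1}(x)\subseteq P_{j-1}\subseteq O_j(x)\,}. \]
This description, together with the bijection on the lower levels, is what turns the various level-dependent clauses of Definition~\ref{CleanFactor} into statements about prefixes and intervals of chains.

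\emph{From a vertex to its chain.} Let $x\in V_k$ with $k\ge3$, identified with a maximal $X\in V^*_k$, and let $P=N_{k-1}(x)=X\cap V_{k-1}$ be its set of parents, so $|P|\ge2$ and $N_j(x)=\bigcap_{p\in P}N_j(p)$ for $j\le k-2$. Feeding the strengthened hypothesis into each parent and using the ``low level'' clause of $V^*_k$ ($N_{k-3}(p)=N_{k-3}(p')$ for $k\ge5$, the $N_0$-equality clause for $k=4$, no such clause for $k=3$), I would first deduce that all the sequences $S(p)$, $p\in P$, share the common prefix $(O_1(x),\ldots,O_{k-3}(x))$ and that $p\mapsto O_{k-2}(p)$ is a bijection from $P$ onto $\set{O\in\cur{O}\ |\ O_{k-2}(x)\subseteq O\subseteq O_{k-1}(x)}$; in particular there is a unique ``prefix parent'' $z\in P$ with $S(z)=(O_1(x),\ldots,O_{k-2}(x))$, so that $(O_1(x),\ldots,O_{k-2}(x))$ is already a strict chain of $\cur{O}$ by induction. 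It then remains to handle $O_{k-1}(x)$: from $K(O_{k-1}(x))=N_1(x)=\bigcap_{p\in P}N_1(p)=\bigcap_{p\in P}K(O_{k-2}(p))=K\bigl(\bigcup_{p\in P}O_{k-2}(p)\bigr)$ (Remarks~\ref{Kincl} and~\ref{PteOK}) one gets that $O_{k-1}(x)$ is the minimal element of $\cur{O}'$ containing $\bigcup_{p\in P}O_{k-2}(p)$; the clause $|\bigcap_{p\in P}N_1(p)|\ge2$ gives $|K(O_{k-1}(x))|\ge2$, hence $O_{k-1}(x)\in\cur{O}$, while $|P|\ge2$ forces some parent with $O_{k-2}(p)\supsetneq O_{k-2}(x)$, and therefore $O_{k-2}(x)\subsetneq O_{k-1}(x)$. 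Thus $S(x)$ is a strict chain of $\cur{O}$ of length $k-1$, and the very same computation re-establishes the strengthened hypothesis at level $k$.

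\emph{Injectivity and surjectivity.} For injectivity, observe that by the definition of factorisation a vertex $x\in V_k$ is literally the set $N_{k-1}(x)\cup\bigcup_{j\le k-2}N_j(x)$, and each $N_j(x)$ is determined by $S(x)$: for $j=0,1$ directly, for $j=k-1$ as the parent set recovered above, and for $j\le k-2$ through $N_j(x)=\bigcap_{p\in P}N_j(p)$ together with the strengthened hypothesis; hence $S(x)=S(x')$ implies $x=x'$. For surjectivity, given a chain $O_1\subsetneq\cdots\subsetneq O_{k-1}$ in $\cur{O}$, I would take the unique $z\in V_{k-1}$ with $S(z)=(O_1,\ldots,O_{k-2})$ (induction), set $P=\set{w\in V_{k-1}\ |\ S(w)=(O_1,\ldots,O_{k-3},Q),\ O_{k-2}\subseteq Q\subseteq O_{k-1}}$, form $X=P\cup\bigcap_{p\in P}N(p)$, check that $X$ satisfies the clauses of $V^*_k$ and is maximal for them, and finally compute $S$ of the resulting vertex of $V_k$: using that every element of $\cur{O}$ in the relevant interval is realised at level $V_2$ by Lemma~\ref{Levels012} (hence at every level below $k$), one checks $\bigcap_{p\in P}N_0(p)=O_1$ and $K\bigl(\bigcup_{p\in P}O_{k-2}(p)\bigr)=K(O_{k-1})$, which yields $S(x)=(O_1,\ldots,O_{k-1})$.

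\emph{Main obstacle.} The delicate part is to propagate the exact shape of the neighbourhoods $N_j(x)$ through the induction while simultaneously establishing the two maximality facts used above: that the parent set of a vertex is precisely the full interval $\set{O\in\cur{O}\ |\ O_{k-2}(x)\subseteq O\subseteq O_{k-1}(x)}$ (equivalently, that no vertex of $V_{k-1}$ can be adjoined to $X$ without either breaking a clause of $V^*_k$ or strictly shrinking $\bigcap_{p}N(p)$), and that the candidate $X$ constructed in the surjectivity step is genuinely maximal in $V^*_k$. One must also treat $k=3$ and $k=4$ as separate base cases of the recursion, since there the ``lower levels'' $k-2$ and $k-3$ collapse onto $V_1$ and $V_0$, which is exactly why Definition~\ref{CleanFactor} imposes the slightly stronger and differently phrased conditions at those levels.
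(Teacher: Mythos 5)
Your proposal follows essentially the same route as the paper: the ``strengthened hypothesis'' describing $N_j(x)$ as an interval of chains sharing a prefix is precisely the paper's recursion hypothesis $H_N$, the surjectivity construction $X=P\cup\bigcap_{p\in P}N(p)$ over the interval of parent sequences is the same, and the paper's auxiliary hypothesis $H_E$ (equal neighbourhoods at level $k-2$ force equal lower neighbourhoods) is just a packaged consequence of $H_N$ and surjectivity that you inline instead. The sketch correctly identifies where the real work lies (the two maximality arguments and the special handling of $k=3,4$), which is exactly where the bulk of the paper's proof is spent.
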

\begin{proof}

The case $k=2$ directly follows from Lemma~\ref{Levels012}, then, in the following, we only deal with the cases where $k\geq 3$. We prove Theorem~\ref{ThCharSeq} by recursion, using the five recursion hypotheses below, namely $H_{tar},H_{inj},H_{sur},H_N$ and $H_E$. Actually, hypotheses $H_{tar},H_{inj}$ and $H_{sur}$ are the targeted properties: they imply that map $\phi$ is a bijection. Hypotheses $H_N$ and $H_E$ contain the fundamental structure of the multipartite graph series. Hypothesis $H_N$ is essential, it gives a complete characterisation of the neighbourhood of a vertex on the lower levels. Hypothesis $H_E$ shows that the condition of equality of the neighbourhood at level $k-3$ of the children of $x$ in Definition~\ref{CleanFactor} actually induce a control of the neighbourhoods of the children of $x$ on all the lower levels.

\begin{description}
\item[$H_{tar}(k)$]:\label{strict} If $x\in V_k$ then $O_1(x)\subsetneq \ldots \subsetneq O_{k-1}(x)$ and $(O_1(x), \ldots , O_{k-1}(x))\in \cur{O}^{k-1}$,
\item[$H_{inj}(k)$]:\label{inj} If $x,y\in V_k$ then $x\neq y$ implies $S(x)\neq S(y)$,
\item[$H_{sur}(k)$]:\label{surj} For any sequence $(O_1,\ldots ,O_{k-1})\in \cur{O}^{k-1}$ with $O_1\subsetneq \ldots \subsetneq O_{k-1}$, there exists $x\in V_k$ such that $S(x)=(O_1,\ldots ,O_{k-1})$.
\medskip
\item[$H_N(k)$]: If $x \in V_k$, then for all $j$ such that $2\leq j<k$, $N_j(x)=W_j(x)$; where $W_j(x)$ is the set $\set{y\in V_j\ |\ (O_1(y),\ldots ,O_{j-2}(y))=(O_1(x),\ldots ,O_{j-2}(x)) \text{ and }  O_{j-1}(x)\subseteq O_{j-1}(y)\subseteq O_j(x)}$ (using the convention $(O_1(.),\ldots,O_{j-2}(.))=()$ for $j=2$).
\item[$H_E(k)$], for $k \geq 4$: If  $y_1, y_2 \in V_{k}$ such that $N_{k-2}(y_1) = N_{k-2}(y_2)$ then $\forall p\in\int{0,k-2}\setminus\set{1}, N_p(y_1)=N_p(y_2)$.
\end{description}

At each level $k$ of recursion, we start by showing $H_N(k)$, using $H_E(k-1$), then we use it to prove $H_{arr}(k),H_{inj}(k),$ and $H_{sur}(k)$, and we finish by proving $H_E(k)$.

\medskip
%
%

\medskip
\noindent\textbf{Initialisation step.} We will prove that $H_{N}(3)$, $H_{tar}(3)$, $H_{inj}(3)$, and $H_{sur}(3)$ are true. We do not prove $H_E(3)$ since it is undefined. It is worth to note that we do not need $H_E(3)$ in the proof of $H_N(4)$: instead, we use Definition~\ref{CleanFactor} that provide us the initialisation we need.

\noindent\textbf{Proof of $H_{N}(3)$.}
Since in $H_{N}(k)$, $j$ varies from $2$ to $k-1$, then in $H_{N}(3)$, $j$ only takes the value $2$. Then, to prove $H_{N}(3)$, we just have to prove that for all $x\in V_3$, $N_2(x)$ is equal to $W_2(x)=\set{y\in V_2\ |\ O_1(x)\subseteq O_1(y)\subseteq O_2(x)}$.
Let $x\in V_3$; we first show that $N_2(x)\subseteq W_2(x)$. We denote by $a_1,\ldots ,a_l$, with $l\geq 2$, the elements of the set $N_2(x)$.
Clearly, for any $i\in\int{1,l}$, $\bigcap_{1\leq i\leq l} N_0(a_i)\subseteq N_0(a_i)\subseteq \bigcup_{1\leq i\leq l} N_0(a_i)$. From Definition~\ref{DefCharSeq}, we have $N_0(a_i)=O_1(a_i)$, $O_1(x)=N_0(x)=\bigcap_{1\leq i\leq l} N_0(a_i)$, and $K(O_2(x))=\bigcap_{1\leq i\leq l} N_1(a_i)$. Moreover, from Remark~\ref{V1}, $N_1(a_i)=K(O_1(a_i))$, so we have $K(O_2(x))=\bigcap_{1\leq i\leq l} K(O_1(a_i))$.
And since, by definition, $O_2(x)\in\cur{O}'$, then, from Remark~\ref{PteOK}, $\bigcup_{1\leq i\leq l} O_1(a_i)\subseteq O_2(x)$. Consequently, for any $i\in\int{1,l}$, $O_1(x)\subseteq O_1(a_i)\subseteq O_2(x)$. That is $N_2(x)\subseteq W_2(x)$.

Conversely, we show that if $y\in W_2(x)$, then $y \in N_2(x)$. To that purpose, we show that $N_0(x)\subseteq N_0(y)$ and $N_1(x)\subseteq N_1(y)$, which implies, by maximality of $x$ in $V^*_3$ (see Definition~\ref{def:factorisation}), that $y\in N_2(x)$. First, we have $N_0(x)=O_1(x)$, and since $y\in W_2(x)$, we also have $O_1(x)\subseteq O_1(y)=N_0(y)$. Then, $N_0(x)\subseteq N_0(y)$. Since $O_1(y)\subseteq O_2(x)$, we have $K(O_2(x))\subseteq K(O_1(y))$. And from Remark~\ref{V1}, we have $K(O_1(y))=N_1(y)$ and $K(O_2(x))=N_1(x)$. Thus, $N_1(x)\subseteq N_1(y)$ and we conclude that $y\in N_2(x)$. Finally, we showed that $N_2(x)=W_2(x)$, and so $H_{N}(3)$ is true.\\

\noindent\textbf{Proof of $H_{tar}(3)$.}
Since, from $H_{N}(3)$, $N_2(x)=W_2(x)$ and since $|N_2(x)|\geq 2$, it follows that $O_1(x)\subsetneq O_2(x)$, otherwise $W_2(x)$ would contain at most one element. By definition of $V^*_3$, $|N_0(x)|\geq 2$. Since $O_1(x)=|N_0(x)|$, it follows that $O_1(x)$, and so $O_2(x)$, contains at least two elements. Moreover, from Remark~\ref{V1}, we have $K(O_2(x))=N_1(x)$. And from Definition~\ref{CleanFactor}, $|N_1(x)|\geq 2$. It follows that $K(O_2(x))$ contains at least two elements, and so does $K(O_1(x))$ since $K(O_2(x))\subseteq K(O_1(x))$.
Thus $O_1(x)$ and $O_2(x)$ both belong to $\cur{O}$: $H_{tar}(3)$ is true.

\noindent\textbf{Proof of $H_{inj}(3)$.}
For any $z\in V_3$, $N_2(z)=W_2(z)$. Thus, for any $x,y\in V_3$, $(O_1(x), O_2(x))=(O_1(y),O_2(y))$ implies that $N_2(x)=N_2(y)$, which implies that $x=y$. So $H_{inj}(3)$ holds.\\

\noindent\textbf{Proof of $H_{sur}(3)$.}
Let $O_1,O_2\in \cur{O}$ such that $O_1\subsetneq O_2$. We will find an element $x$ of $V_3$ such that $O_1(x) = O_1$ and $O_2(x) = O_2$.
Let $Y=\set{y\in V_{2}\ |\  O_{1}\subseteq O_1(y) \subseteq O_{2}}$. Let $x=Y \cup\bigcap_{y\in Y} N(y)$, we prove that $x$ is the desired element.

First, to prove that $x \in V_3$, we must prove that $x\in V^*_3$, that is $|x \cap V_2| \geq 2$ and $|x \cap V_1| \geq 2$ and $|x \cap V_0| \geq 2$.
From Lemma~\ref{Levels012} , there exists two distinct elements $y_1,y_2\in V_2$ such that $O_1(y_1)=O_1$ and $O_1(y_2)=O_2$. Clearly, $\set{y_1,y_2}\subseteq Y$, which gives $|x \cap V_2| \geq 2$.
Furthermore, $x\cap V_1=\bigcap_{y\in Y} N_1(y)=\bigcap_{y\in Y} K(O_1(y))$, from Remark~\ref{V1}. Since, for any $y\in Y$, $O_1(y)\subseteq O_2$, we have also $K(O_2)\subseteq K(O_1(y))$. It follows that $K(O_2)\subseteq x\cap V_1$, and $O_2\in\cur{O}$ implies that $|K(O_2)|\geq 2$, then $|x\cap V_1|\geq 2$.
We also have $x\cap V_0=\bigcap_{y\in Y} N_0(y)=\bigcap_{y\in Y} O_1(y)$. And since $O_1\subseteq O_1(y)$ for all $y\in Y$, then $O_1\subseteq x\cap V_0$. It follows that $|x \cap V_0| \geq 2$.

We now show that $x$ is maximal in $V^*_3$. Let $z\in V_{2} \setminus Y$ \textit{i.e.} $O_1\not\subseteq O_1(z)$ or $O_1(z)\not\subseteq O_{2}$, we prove that $x\cap V_0\not\subseteq N_0(z)$ or $x\cap V_1\not\subseteq N_1(z)$, which implies that $\bigcap_{y\in Y\cup\set{z}} N(y)\subsetneq \bigcap_{y\in Y} N(y)$ and that $x$ is maximal, since this holds for all $z\in V_{2} \setminus Y$.
Let us first note that $x\cap V_0= \bigcap_{y \in Y} N_0(y) = \bigcap_{y \in Y} O_1(y) = O_1$, the last equality coming from the fact that $\forall O\in\cur{O}, \exists z\in V_2, O_1(z)=O$  (see Lemma~\ref{Levels012}).
On the other hand $x\cap V_1 = \bigcap_{y \in Y} N_1(y) = \bigcap_{y \in Y} K(O_1(y)) = K(\bigcup_{y \in Y} O_1(y)) = K(O_2)$, again using Lemma~\ref{Levels012} for the last equality.
Now, if $O_1(z)\not\subseteq O_{2}$, since $O_1(z)\in\cur{O}'$, then, from Remark~\ref{Kincl}, $K(O_2) \not\subseteq K(O_1(z))$. And since, from what precedes, $K(O_2)=x\cap V_1$ and, from Remark~\ref{V1}, $K(O_1(z))=N_1(z)$, we obtain $x\cap V_1\not\subseteq N_1(z)$.
On the other hand, if $O_1\not\subseteq O_1(z)$, since $O_1=x\cap V_0$ (see above) and $O_1(z)=N_0(z)$ (by definition), we obtain $x\cap V_0\not\subseteq N_0(z)$.
Consequently, $x$ is maximal in $V^*_3$ and then $x\in V_3$.

The last condition we have to check for proving $H_{sur}(3)$ is that $O_1(x)= O_1$ and $O_2(x)=O_2$. We already have $O_1=N_0(x)=O_1(x)$. Moreover by definition, $O_2(x)$ is the unique element of $\cur{O}'$ such that $K(O_2(x)) = \bigcap_{y \in Y} N_1(y)$, and we also have from above that $K(O_2) = \bigcap_{y \in Y} N_1(y)$, then $O_2(x)= O_2$.

Finally, $H_{sur}(3)$ is true.


\medskip
\noindent\textbf{Recursion step.}
Now, let us suppose that $k\geq 4$ and that for all $i$ such that $3 \leq i <k$, $H_{tar}(i)$, $H_{inj}(i)$, $H_{sur}(i)$ and $H_{E}(i)$ are true. Note that we did not prove $H_E(3)$, which is not even defined, but we don't need it. Actually, in step $k$ of the recursion, $H_E(k-1)$ is used only in the proof of $H_N(k)$. For proving $H_N(4)$, the use of $H_E(3)$ is replaced by the use of the definition of $V_3^*$ (Definition~\ref{CleanFactor}).\\

\noindent\textbf{Proof of $H_{N}(k)$.}
Let $x\in V_k$. We denote by $a_1,\ldots ,a_l$ the elements of the set $N_{k-1}(x)$. 
Let $i_1,i_2\in\int{1,l}$. 
If $k \geq 5$, by hypothesis $H_E(k-1)$, we have that $N_p(a_{i_1})=N_p(a_{i_2})=N_p(x)$ for all  $p\in\int{0,k-3}\setminus\set{1}$. If $k=4$, by Definition~\ref{CleanFactor}, we have that $N_0(a_{i_1})=N_0(a_{i_2})=N_0(x)$. Thus, independently of the value of $k\geq 4$ we have $N_p(a_{i_1})=N_p(a_{i_2})=N_p(x)$ for all $p\in\int{0,k-3}\setminus\set{1}$.
Then, using the Definition~\ref{DefCharSeq} of the characterising sequence, it follows that, for $p=0$, we obtain $O_1(a_i)=O_1(a_j)$ and for $p\in\int{2,k-3}$, we obtain $O_p(a_i)=O_p(a_j)= O_p(x)$. 

Let $j\in\int{2,k-3}$ and let $i\in\int{1,l}$. From recursion hypothesis $H_{N}(k-1)$ applied to $a_i$, we have $N_j(a_i)=W_j(a_i)$. Since for all $q\in \int{1,k-3}$ we have $O_q(x)=O_q(a_i)$, then, from the definition of $W_j$, it follows that $W_j(x)=W_j(a_i)$. Finally, since $N_j(x)=N_j(a_i)$, we obtain $N_j(x)=W_j(x)$, for all $2\leq j\leq k-3$. Then we just have to prove that $N_{k-2}(x)=W_{k-2}(x)$ and $N_{k-1}(x)=W_{k-1}(x)$.\\

We start with $N_{k-1}(x)$. We will first show that for any $i\in\int{1,l}$ we have $a_i\in W_{k-1}(x)$, that is $N_{k-1}(x)\subseteq W_{k-1}(x)$. As explained above, we already know that, for all $q\in \int{1,k-3}$, $O_q(x)=O_q(a_i)$. Then, we only need to show that $O_{k-2}(x)\subseteq O_{k-2}(a_i)\subseteq O_{k-1}(x)$.

Let us show the first inclusion: $O_{k-2}(x)\subseteq O_{k-2}(a_i)$. By definition, $N_{k-2}(x) = \bigcap_{t\in\int{1,l}}N_{k-2}(a_t)\subseteq N_{k-2}(a_i)$. Then, we have $\bigcap_{b\in N_{k-2}(a_i)} N_1(b)\subseteq \bigcap_{b\in N_{k-2}(x)} N_1(b)$. And since, by definition, $\bigcap_{b\in N_{k-2}(a_i)} N_1(b)=K(O_{k-2}(a_i)$ and $\bigcap_{b\in N_{k-2}(x)} N_1(b)=K(O_{k-2}(x)$, then we obtain $K(O_{k-2}(a_i))\subseteq K(O_{k-2}(x))$. Thus, from Remark~\ref{Kincl}, we have $O_{k-2}(x)\subseteq O_{k-2}(a_i)$.


Let us now show the second inclusion: $O_{k-2}(a_i)\subseteq O_{k-1}(x)$. Since $N_{k-2}(a_i)\subseteq \bigcup_{t\in\int{1,l}}N_{k-2}(a_t)$ then $\bigcap_{b\in N_{k-2}(a_i)}N_1(b)\supseteq \bigcap_{b\in\bigcup_{t\in\int{1,l}}N_{k-2}(a_t)}N_1(b)$, and by definition $\bigcap_{b\in N_{k-2}(a_i)}N_1(b)=K(O_{k-2}(a_i))$. In order to show the inclusion we aim at, we will show that $\bigcap_{b\in\bigcup_{t\in\int{1,l}}N_{k-2}(a_t)}N_1(b)=\bigcap_{t\in\int{1,l}}N_1(a_t)$. Since, by definition again, $\bigcap_{t\in\int{1,l}}N_1(a_t)=K(O_1(x))$, this will give $K(O_{k-2}(a_i))\supseteq K(O_{k-2}(x))$, which implies $O_{k-2}(a_i))\subseteq O_{k-2}(x)$, the inclusion we aim at.

Then, let us show the equality $\bigcap_{b\in\bigcup_{t\in\int{1,l}}N_{k-2}(a_t)}N_1(b)=\bigcap_{t\in\int{1,l}}N_1(a_t)$ by a double inclusion. Let $z\in\bigcap_{t\in\int{1,l}}N_1(a_t)$, then for all $t\in\int{1,l}$, $z\in N_1(a_t)$. It follows that for all $b\in N_{k-2}(a_t)$, $z\in N_1(b)$. Since this holds for all $t\in\int{1,l}$ and for all $b\in N_{k-2}(a_t)$, then we obtain $z\in\bigcap_{b\in\bigcup_{t\in\int{1,l}}N_{k-2}(t)}N_1(b)$. Conversely, let $z\in\bigcap_{b\in\bigcup_{t\in\int{1,l}}N_{k-2}(a_t)}N_1(b)$, we show that $z\in\bigcap_{t\in\int{1,l}}N_1(a_t)$. For all $b\in\bigcup_{t\in\int{1,l}}N_{k-2}(a_t)$ we have $z\in N_1(b)$. In particular, for any $t\in\int{1,l}$ and for any $b\in N_{k-2}(a_t)$, we have $z\in N_1(b)$, and so $z\in N_1(a_t)$. As this holds for any $t\in\int{1,l}$, then $z\in\bigcap_{t\in\int{1,l}}N_1(a_t)$, that is $\bigcap_{b\in\bigcup_{t\in\int{1,l}}N_{k-2}(t)}N_1(b)\subseteq\bigcap_{t\in\int{1,l}}N_1(a_t)$. Finally, as we already showed the converse inclusion, we obtain $\bigcap_{b\in\bigcup_{t\in\int{1,l}}N_{k-2}(t)}N_1(b)=\bigcap_{t\in\int{1,l}}N_1(a_t)$.

We can now finish the proof of the inclusion $N_{k-1}(x)\subseteq W_{k-1}(x)$. Remember that $\bigcap_{b\in N_{k-2}(a_i)}N_1(b)\supseteq \bigcap_{b\in\bigcup_{t\in\int{1,l}}N_{k-2}(t)}N_1(b)$ and that by definition $\bigcap_{b\in N_{k-2}(a_i)}N_1(b)=K(O_{k-2}(a_i)$.\\ In addition, we just proved that $\bigcap_{b\in\bigcup_{t\in\int{1,l}}N_{k-2}(t)}N_1(b)=\bigcap_{t\in\int{1,l}}N_1(a_t)$, and, by definition again, we have $\bigcap_{t\in\int{1,l}}N_1(a_t)=K(O_{k-1}(x))$. We then obtain $K(O_{k-2}(a_i))$ $\supseteq K(O_{k-1}(x))$, and Remark~\ref{Kincl} concludes that $O_{k-2}(a_i)\subseteq O_{k-1}(x)$, for all $i\in\int{1,l}$. So finally, putting everything together (we proved above that $O_{k-2}(x)\subseteq O_{k-2}(a_i)$) we get $O_{k-2}(x)\subseteq O_{k-2}(a_i)\subseteq O_{k-1}(x)$, for all $i\in\int{1,l}$, which completes our proof of $a_i\in W_{k-1}(x)$, for all $i\in\int{1,l}$, that is $N_{k-1}(x)\subseteq W_{k-1}(x)$.\\

Let us now prove the converse inclusion: $W_{k-1}(x)\subseteq N_{k-1}(x)$.
Let $y\in W_{k-1}(x)$, we have $O_q(y)=O_q(x)$ for all $q\in\int{1,k-3}$. Moreover, as we showed at the beginning of the proof of $H_{N}(k)$, for all $i\in\int{1,l}$, and all $q\in\int{1,k-3}$ we have $O_q(a_i)=O_q(x)$ and so $O_q(a_i)=O_q(y)$. For $q=1$, from the definition of $O_1$, this gives $N_0(a_i)=N_0(y)$. For $q\geq 2$ (which occurs only for $k\geq 5$), using recursion hypothesis $H_N(k-1)$ that characterises, for any $z\in V_{k-1}$ and any $q\geq 2$, $N_q(z)$ as a function of only $O_1(z),\ldots,O_q(z)$, we then obtain that since $y$ and $a_i$ have the same sequence $O_1(y),\ldots,O_{k-3}(y)=O_1(a_i),\ldots,O_{k-3}(a_i)$, they necessarily have the same neighbourhood $N_q(y)=N_q(a_i)$, for all $q\in\int{2,k-3}$. Since we showed that $y$ and $a_i$ also have the same neighbourhood on $V_0$, and since we showed at the beginning of the proof of $H_N(k)$ that $N_p(a_i)=N_p(x)$ for all $p\in\int{0,k-3}\setminus\set{1}$, it follows that for all $p\in\int{0,k-3}\setminus\set{1}$, we have $N_p(y)=N_p(x)$.
Then, in order to show that $y\in N_{k-1}(x)$, we only need to show that $N_1(x)\subseteq N_1(y)$ and $N_{k-2}(x)\subseteq N_{k-2}(y)$, which implies, by maximality of $x$ (see Definition~\ref{def:factorisation}), that $y\in N_{k-1}(x)$. First, let us show that $N_1(x)\subseteq N_1(y)$. Since $y\in W_{k-1}(x)$, we have $O_{k-2}(y)\subseteq O_{k-1}(x)$, which implies $K(O_{k-1}(x))\subseteq K(O_{k-2}(y))$. And since, from remark~\ref{V1}, $K(O_{k-1}(x))=N_1(x)$ and $K(O_{k-2}(y))=N_1(y)$, then we get the desired inclusion: $N_1(x)\subseteq N_1(y)$.

Let us now show that $N_{k-2}(x)\subseteq N_{k-2}(y)$. Let $z\in N_{k-2}(x)$. By recursion hypothesis $H_{N}(k-1)$, we know that $N_{k-2}(y)=W_{k-2}(y)$. Thus, our aim is to show that $z\in W_{k-2}(y)$, that is $O_q(z)=O_q(y)$ for all $q\in\int{1,k-4}$ and $O_{k-3}(y)\subseteq O_{k-3}(z)\subseteq O_{k-2}(y)$. Let $i\in\int{1,l}$, since $z\in N_{k-2}(x)$ then $z\in N_{k-2}(a_i)$. It follows by recursion hypothesis $H_{N}(k-1)$ that for all $q\in\int{1,k-4}, O_q(z)=O_q(a_i)$. And since we already showed that $O_q(a_i)=O_q(x)$, then we obtain $O_q(z)=O_q(x)$ for all $q\in\int{1,k-4}$. On the other hand, since $y\in W_{k-1}(x)$, then $O_q(y)=O_q(x)$ for all $q\in\int{1,k-4}$, which finally gives that $O_q(y)=O_q(z)$ for all $q\in\int{1,k-4}$.

Let us now show that $O_{k-3}(z)\subseteq O_{k-2}(y)$. Since $y\in W_{k-1}(x)$, we have $O_{k-2}(x)\subseteq O_{k-2}(y)$. Then, it is sufficient to show that $O_{k-3}(z)\subseteq O_{k-2}(x)$. We state this fact as a proposition as we use it further in the proof:\\
\textbf{(Prop. A)} for any vertex $z\in N_{k-2}(x)$, we have $O_{k-3}(z)\subseteq O_{k-2}(x)$.\\
Clearly, since $z\in N_{k-2}(x)$, we have $N_1(z)\supseteq \bigcap_{b\in N_{k-2}(x)}N_1(b)$. By definition, $N_1(z)=K(O_{k-3}(z)$ and $\bigcap_{b\in N_{k-2}(x)}N_1(b)=K(O_{k-2}(x))$. We then obtain $K(O_{k-3}(z))$ $\supseteq K(O_{k-2}(x))$, which gives, from Remark~\ref{Kincl}, $O_{k-3}(z)\subseteq O_{k-2}(x)$. And since $y\in W_{k-1}(x)$, we have $O_{k-2}(x)\subseteq O_{k-2}(y)$. Thus, $O_{k-3}(z)\subseteq O_{k-2}(y)$.

We now show that $O_{k-3}(y)\subseteq O_{k-3}(z)$. Since $y\in W_{k-1}(x)$, then $O_{k-3}(y)=O_{k-3}(x)$. As we already showed, for any $i\in\int{1,l}$, we have $O_{k-3}(x)=O_{k-3}(a_i)$, and so $O_{k-3}(y)=O_{k-3}(a_i)$. Moreover, since $z\in N_{k-2}(x)$ then $z\in N_{k-2}(a_i)$. And by recursion hypothesis $H_{N}(k-1)$, we have $W_{k-2}(a_i)=N_{k-2}(a_i)$. Thus, $z\in W_{k-2}(a_i)$ satisfies $O_{k-3}(a_i)\subseteq O_{k-3}(z)$. Finally, we obtain $O_{k-3}(y)\subseteq O_{k-3}(z)$, which achieves the proof of $z\in N_{k-2}(y)$. Thus, we have $N_{k-2}(x)\subseteq N_{k-2}(y)$.

In summary, we showed that for any $y\in W_{k-1}(x)$, we have $N_p(y)=N_p(x)$ for all $p\in\int{0,k-3}\setminus\set{1}$ and $N_1(x)\subseteq N_1(y)$ and $N_{k-2}(x)\subseteq N_{k-2}(y)$. Then, by maximality of $x$ (see Definition~\ref{def:factorisation}), $y$ belongs to $N_{k-1}(x)$. That is, $W_{k-1}(x)\subseteq N_{k-1}(x)$. As we also showed the converse inclusion, we obtain $N_{k-1}(x)=W_{k-1}(x)$.

In order to achieve the proof of $H_{N}(k)$, we still have to show that $N_{k-2}(x)=W_{k-2}(x)$. We start with $W_{k-2}(x)\subseteq N_{k-2}(x)$. Let $z\in W_{k-2}(x)$. Let $i\in\int{1,l}$, we show that $z\in N_{k-2}(a_i)$. By recursion hypothesis $H_{N}(k-1)$ we know that $N_{k-2}(a_i)=W_{k-2}(a_i)=\set{w\in V_{k-2}\ |\ (O_1(w),\ldots ,O_{k-4}(w))=(O_1(a_i),\ldots ,O_{k-4}(a_i)) \text{ and }  O_{k-3}(a_i)$ $\subseteq O_{k-3}(w)\subseteq O_{k-2}(a_i)}$. As we already mentioned several times, we have $(O_1(a_i),\ldots ,$ $O_{k-4}(a_i))=(O_1(x),\ldots ,O_{k-4}(x))$. Since $z\in W_{k-2}(x)$, we also have $(O_1(z),\ldots ,$ $O_{k-4}(z))=(O_1(x),\ldots ,O_{k-4}(x))$, and then $(O_1(z),\ldots ,O_{k-4}(z))=(O_1(a_i),\ldots ,$ $O_{k-4}(a_i))$. Again, since $z\in W_{k-2}(x)$, we have $O_{k-3}(x)\subseteq O_{k-3}(z)\subseteq O_{k-2}(x)$. Since we already proved that $N_{k-1}(x)=W_{k-1}(x)$, we know that $O_{k-3}(a_i)=O_{k-3}(x)$ and $O_{k-2}(x)\subseteq O_{k-2}(a_i)$. It follows that $O_{k-3}(a_i)\subseteq O_{k-3}(z)\subseteq O_{k-2}(a_i)$, which shows that $z\in N_{k-2}(a_i)$. As this holds for any $i\in\int{1,l}$, then we conclude that $z\in N_{k-2}(x)$.

Conversely, let $z\in N_{k-2}(x)$. Then, for all $i\in\int{1,l}$, $z\in N_{k-2}(a_i)$. From recursion hypothesis $H_{N}(k-1)$ applied to $a_i$, we get $(O_1(z),\ldots ,O_{k-4}(z))=(O_1(a_i),\ldots ,$ $O_{k-4}(a_i))$. And for the same reason, we also have $O_{k-3}(a_i)\subseteq O_{k-3}(z)$. As we know that $(O_1(a_i),\ldots ,O_{k-3}(a_i))=(O_1(x),\ldots ,O_{k-3}(x))$, we obtain $(O_1(z),\ldots ,O_{k-4}(z))$ $=(O_1(x),\ldots ,O_{k-4}(x))$ and $O_{k-3}(x)\subseteq O_{k-3}(z)$. Then, the only thing left we have to show in order to prove that $z\in W_{k-2}(x)$, which is our goal, is to prove that $O_{k-3}(z)\subseteq O_{k-2}(x)$. In fact, we already proved this proposition in the proof of $W_{k-1}(x)\subseteq N_{k-1}(x)$ above, referred as (Prop. A) in the text. So we finally obtain that $N_{k-2}(x)\subseteq W_{k-2}(x)$, and since we already proved the converse inclusion, we obtain the equality between the two sets, $N_{k-2}(x)=W_{k-2}(x)$, which completes our proof of $H_{N}(k)$.\\

\noindent\textbf{Proof of $H_{tar}(k)$.}
From $H_{N}(k)$ we know that for any $i\in\int{1,l}$, $(O_1(a_i),\ldots , O_{k-3}(a_i))$ $=(O_1(x),\ldots , O_{k-3}(x))$. Then, from recursion hypothesis $H_{tar}(k-1)$, we have $O_1(x)\subsetneq \ldots \subsetneq O_{k-3}(x)$ and, in particular, we have $O_{1}(x)\in\cur{O}$ and so $|O_1(x)|\geq 2$. Since $N_{k-2}(x)=W_{k-2}(x)$ and $|N_{k-2}(x)|>1$, necessarily $O_{k-3}(x)\subsetneq O_{k-2}(x)$.
Similarly, the fact that $N_{k-1}(x)=W_{k-1}(x)$ and $|N_{k-1}(x)|>1$ implies that $O_{k-2}(x)\subsetneq O_{k-1}(x)$. At last, from Remark~\ref{V1}, we have $K(O_{k-1}(x))=N_1(x)$, and since $|N_1(x)|\geq 2$, it follows that $|K(O_{k-1}(x))|\geq 2$. Combined with the fact that $|O_1(x)|\geq 2$, this implies that for all $j\in\int{1,k-1}$, we have $O_j(x)\in\cur{O}$. Thus, $H_{tar}(k)$ is true.\\

\noindent\textbf{Proof of $H_{inj}(k)$.}
Let $x,x'\in V_k$ such that $S(x)=S(x')$. From $H_{N}(k)$, $N_{k-1}(x)=W_{k-1}(x)$ and $N_{k-1}(x')=W_{k-1}(x')$. And since $S(x)=S(x')$, we have $W_{k-1}(x)=W_{k-1}(x')$. As a consequence, $N_{k-1}(x)=N_{k-1}(x')$ and so $x=x'$. Therefore $H_{inj}(k)$ is true.\\

\noindent\textbf{Proof of $H_{sur}(k)$.}
Let $(O_1,\ldots ,O_{k-1})\in \cur{O}^{k-1}$ such that $O_1\subsetneq \ldots \subsetneq O_{k-1}$.
From recursion hypothesis $H_{sur}(k-1)$, for any $P\in\cur{O}$ such that $O_{k-3}\subsetneq P$, there exists $y_P\in V_{k-1}$ such that $S(y_P)=(O_1,\ldots , O_{k-3}, P)$. We denote by $Y$ the set $Y=\set{y\in V_{k-1}\ |\ (O_1(y),\ldots , O_{k-3}(y))=(O_1,\ldots , O_{k-3}) \text{ and } O_{k-2}\subseteq O_{k-2}(y)\subseteq O_{k-1}}$.
Let $x=Y\cup\bigcap_{y\in Y} N(y)$. We will show that $x$ is maximal in $V_k^*$ and that the corresponding element of $V_k$ has the desired sequence $S(x)=(O_1,\ldots,O_{k-1})$.

Let us start by showing that $x\in V_k^*$. Since $O_{k-2}\subsetneq O_{k-1}$, we have $|Y|\geq 2$, that is $|x\cap V_{k-1}|\geq 2$.
From recursion hypothesis $H_{N}(k-1)$, for any $y\in Y$, $N_{k-2}(y)=\set{t\in V_{k-2}\ |\ (O_1(t),\ldots,O_{k-4}(t))=(O_1,\ldots,O_{k-4})$ and $O_{k-3}\subseteq O_{k-3}(t)\subseteq O_{k-2}(y)}$. And since, by definition, $O_{k-2}\subseteq O_{k-2}(y)$, then $N_{k-2}(y)$ contains at least the two elements of $V_{k-2}$ having characterising sequences $(O_1,\ldots ,O_{k-4},O_{k-3})$ and $(O_1,\ldots ,O_{k-4},O_{k-2})$, which do exist from recursion hypothesis $H_{sur}$. Since this is true for all $y\in N_{k-2}(x)$, then $x$ itself has these two elements as neighbours on level $V_{k-2}$. Then, $|x\cap V_{k-2}|\geq 2$. Let us now show that $|x\cap V_1|\geq2$. For any $y\in Y$, from Remark~\ref{V1}, we have $N_1(y)=K(O_{k-2}(y))$. Since $O_{k-2}(y)\subseteq O_{k-1}$ then $K(O_{k-2}(y))\supseteq K(O_{k-1})$. Thus, we obtain $x\cap V_1=\bigcap_{y\in Y} N_1(y)=\bigcap_{y\in Y} K(O_{k-2}(y))\supseteq K(O_{k-1})$. And since $O_{k-1}\in \cur{O}$, $O_{k-1}$ contains at least two elements and so does $x\cap V_1$. In order to complete the proof of $x\in V_k^*$, we need to show that for all $y,y'\in Y$, we have $N_{k-3}(y)=N_{k-3}(y')$. First, note that, by definition, $(O_1(y),\ldots,O_{k-3}(y))=(O_1(y'),\ldots,O_{k-3}(y'))=(O_1,\ldots,O_{k-3})$. Moreover, recursion hypothesis $H_{N}(k-1)$ gives that the neighbourhood at level $V_{k-3}$ of any vertex $z\in V_{k-1}$ only depends on the piece of sequence $(O_1(z),\ldots ,O_{k-3}(z))$. And since $y$ and $y'$ have the same such pieces of sequence, it follows that $N_{k-3}(y)=N_{k-3}(y')$. Thus $x\in V^*_k$.

We will now show that $x$ is maximal in $V^*_k$. Let $z\in V_{k-1}\setminus Y$, we show that if for some $j\in\int{1,k-3}, O_j(z)\neq O_j$ then there is no element $B\in V_k^*$ containing $Y\cup\set{z}$. We denote $y\in Y$ an arbitrary element of $Y$ and we distinguish between the case where $k=4$ and the case where $k\geq 5$.

Let us start with the general case where $k\geq 5$, we show that $N_{k-3}(z)\neq N_{k-3}(y)$, which implies, from Definition~\ref{CleanFactor}, that there is no element of $V_k^*$ containing both $y$ and $z$. So let $j\in\int{1,k-3}$, such that $O_j(z)\neq O_j$. Since $O_j(y)=O_j$, then we have $O_j(z)\neq O_j(y)$. Moreover, from recursion hypothesis $H_N(k-1)$, we have $N_{k-3}(z)=W_{k-3}(z)$ and $N_{k-3}(y)=W_{k-3}(y)$. We again distinguish several cases depending on the value of $j$.\\
If $j\leq k-5$ (which may occur only when $k\geq 6$), from recursion hypothesis $H_{sur}(k-3)$, there exists $t_1\in V_{k-3}$ such that $S(t_1)=(O_1(z),\ldots ,O_{k-4}(z))$. Clearly, from the definition of  $W_j(z)$, we have $t_1\in W_{k-3}(z)=N_{k-3}(z)$. On the opposite, from the definition of $W_j(y)$, and since $O_j(z)\neq O_j(y)$ with $j\leq k-5$, we obtain $t_1\not\in W_{k-3}(y)=N_{k-3}(y)$ and it follows that $N_{k-3}(z)\neq N_{k-3}(y)$.\\
If $j=k-4$. Since $O_{k-4}(z)\neq O_{k-4}(y)$, then one of the two sets $O_{k-4}(z), O_{k-4}(y)$ is not included in the other, say $O_{k-4}(y)\not\subseteq O_{k-4}(z)$ without loss of generality. Consider again the element $t_1\in N_{k-3}(z)$ described above. Since $O_{k-4}(t_1)=O_{k-4}(z)\not\supseteq O_{k-4}(y)$, then it follows, from the definition of $W_{k-4}(y)$, that $t_1\not\in W_{k-3}(y)=N_{k-3}(y)$, and so $N_{k-3}(z)\neq N_{k-3}(y)$.\\
If $j=k-3$, without loss of generality we can assume that $O_{k-3}(z)\not\subseteq O_{k-3}(y)$. From recursion hypothesis $H_{sur}(k-3)$, there exists $t_2\in V_{k-3}$ such that $O_{k-4}(t_2)=O_{k-3}(z)$ and $(O_1(z),\ldots ,O_{k-5}(z))=(O_1(z),\ldots ,O_{k-5}(z))$ (using, as usual, the convention $(O_1(z),\ldots ,O_{k-5}(z))=()$ if $k=5$). From the definition of $W_{k-3}(z)$ and $W_{k-3}(y)$, we obtain that $t_2\in W_{k-3}(z)=N_{k-3}(z)$ but, since $O_{k-3}(z)\not\subseteq O_{k-3}(y)$, we have $t_2\not\in W_{k-3}(y)=N_{k-3}(y)$. Thus, in all cases where $k\geq 5$, if there exists some $j\in\int{1,k-3}$ such that $O_j(z)\neq O_j$, then $N_{k-3}(z)\neq N_{k-3}(y)$.

Let us now deal with the particular case where $k=4$. In this case, necessarily the index $j\in\int{1,k-3}$ such that $O_j(z)\neq O_j$ is $j=1$. We immediately obtain that $N_0(z)=O_1(z)\neq O_1(y)=N_0(y)$. Then, from Definition~\ref{CleanFactor} (case $k=4$), there is no element $B\in V_k^*$ containing both $y$ and $z$.\\
Finally, we conclude that, regardless of the value of $k\geq 4$, if for some $j\in\int{1,k-3}, O_j(z)\neq O_j$ then there is no element $B\in V_k^*$ containing $Y\cup\set{z}$.

Thus, to show that $x$ is maximal in $V_k^*$ we only need to show that for any $z\in V_{k-1}$ such that $(O_1(z),\ldots , O_{k-3}(z))=(O_1,\ldots , O_{k-3})$, if $O_{k-2}\not\subseteq O_{k-2}(z)$ or $O_{k-2}(z)\not\subseteq O_{k-1}$, then we have $\bigcap_{y\in Y} N(y)\not\subseteq N(z)$.

We first treat the case where $O_{k-2}(z)\not\subseteq O_{k-1}$. In this case, from Remark~\ref{Kincl}, $K(O_{k-1})\not\subseteq K(O_{k-2}(z))$. From Remark~\ref{V1}, we have $K(O_{k-2}(z))=N_1(z)$. We now show that $K(O_{k-1})=\bigcap_{y\in X\cap V_{k-1}} N_1(y)$, which will give us the desired result: $\bigcap_{y\in x\cap V_{k-1}} N_1(y)\not\subseteq N_1(z)$.
From Remark~\ref{V1}, for any $y\in V_{k-1}$, $N_1(y)=K(O_{k-2}(y))$. It follows that $\bigcap_{y\in x\cap V_{k-1}} N_1(y)=\bigcap_{y\in x\cap V_{k-1}} K(O_{k-2}(y))$ and we also have $\bigcap_{y\in x\cap V_{k-1}} K(O_{k-2}(y))=\bigcap_{P\in\cur{O} \text{ and } O_{k-2}\subseteq P\subseteq O_{k-1}} K(P)$, from $H_{sur}(k-1)$ and the definition of $x$. From Lemma~\ref{PteOK}, we get $\bigcap_{P\in\cur{O} \text{ and } O_{k-2}\subseteq P\subseteq O_{k-1}} K(P)=$ $K(\bigcup_{P\in\cur{O} \text{ and } O_{k-2}\subseteq P\subseteq O_{k-1}} P)$, which is clearly equal to $K(O_{k-1})$. And so we have $\bigcap_{y\in x\cap V_{k-1}} N_1(y)=K(O_{k-1})$. As a consequence, we obtain $\bigcap_{y\in x\cap V_{k-1}} N_1(y)\not\subseteq N_1(z)$. Then, adding $z$ to $x\cap V_{k-1}$ would strictly decrease $\bigcap_{y\in x\cap V_{k-1}} N_1(y)$.

Let us now consider the case where $O_{k-2}\not\subseteq O_{k-2}(z)$.
Using recursion hypothesis $H_{N}(k-1)$, for all $y\in x\cap V_{k-1}$ we have $N_{k-2}(y)=\set{t\in V_{k-2}\ |\ O_{k-3}(y) \subseteq O_{k-3}(t) \subseteq O_{k-2}(y) \text{ and } (O_1(t),\ldots ,O_{k-4}(t))=(O_1,\ldots ,O_{k-4})}$. Let us denote $Z=\set{t\in V_{k-2}\ |\ \bigcup_{y\in x\cap V_{k-1}} O_{k-3}(y) \subseteq O_{k-3}(t) \subseteq \bigcap_{y\in x\cap V_{k-1}} O_{k-2}(y) \text{ and } (O_1(t),\ldots ,$ $O_{k-4}(t))=(O_1,\ldots ,O_{k-4})}$ (as usual, we use the convention $(O_1,\ldots , O_{k-4})=()$ when $k=4$). We show that $\bigcap_{y\in x\cap V_{k-1}} N_{k-2}(y)=Z$. Let $t\in \bigcap_{y\in x\cap V_{k-1}} N_{k-2}(y)$, then for all $y\in x\cap V_{k-1}$, $O_{k-3}(y) \subseteq O_{k-3}(t) \subseteq O_{k-2}(y)$ and so $\bigcup_{y\in x\cap V_{k-1}} O_{k-3}(y) \subseteq O_{k-3}(t) \subseteq \bigcap_{y\in x\cap V_{k-1}} O_{k-2}(y)$, that is $t\in Z$. Conversely, if $t\in Z$ then we have $\bigcup_{y\in x\cap V_{k-1}} O_{k-3}(y) \subseteq O_{k-3}(t) \subseteq \bigcap_{y\in x\cap V_{k-1}} O_{k-2}(y)$ and so $O_{k-3}(y) \subseteq O_{k-3}(t) \subseteq O_{k-2}(y)$ for all $y\in x\cap V_{k-1}$, that is $t\in\bigcap_{y\in x\cap V_{k-1}} N_{k-2}(y)$. Thus, $\bigcap_{y\in x\cap V_{k-1}} N_{k-2}(y)$ $=Z$. By definition, for all $y\in x\cap V_{k-1}$, $O_{k-3}(y)=O_{k-3}$ and $O_{k-2}\subseteq O_{k-2}(y)$. It follows that $\bigcup_{y\in x\cap V_{k-1}} O_{k-3}(y)=O_{k-3}$ and $O_{k-2}\subseteq \bigcap_{y\in x\cap V_{k-1}} O_{k-2}(y)$. 
Moreover, from recursion hypothesis $H_{sur}$, there exists $t'\in V_{k-2}$ such that $O_{k-3}(t')=O_{k-2}$ and $(O_1(t'),\ldots ,O_{k-4}(t'))=(O_1,\ldots ,O_{k-4})$. From what precedes, since $O_{k-3}\subseteq O_{k-2}\subseteq \bigcap_{y\in x\cap V_{k-1}} O_{k-2}(y)$, then $t'\in Z$. On the other hand, from recursion hypothesis $H_{N}(k-1)$, we have $N_{k-2}(z)=\set{t\in V_{k-2}\ |\ O_{k-3} \subseteq O_{k-3}(t) \subseteq O_{k-2}(z) \text{ and } (O_1(t),\ldots ,$ $O_{k-4}(t))=(O_1,\ldots ,O_{k-4})}$. And since $O_{k-2}\not\subseteq O_{k-2}(z)$, it follows that $t'\not\in N_{k-2}(z)$, while $t'\in Z=\bigcap_{y\in x\cap V_{k-1}} N_{k-2}(y)$. Thus, $\bigcap_{y\in x\cap V_{k-1}} N_{k-2}(y)\not\subseteq N_{k-2}(z)$ and adding $z$ to $x\cap V_{k-1}$ would strictly decrease $\bigcap_{y\in x\cap V_{k-1}} N_{k-2}(y)$. Finally, $x$ is maximal in $V^*_k$ and is therefore an element of $V_k$.

In order to conclude the proof of $H_{sur}(k)$, let us now show that the element $x$ of $V_k$ has the desired characterising sequence $(O_1,\ldots ,O_{k-1})$. First, from $H_{N}(k)$, which we already proved, we know that $(O_1(x),\ldots,O_{k-3}(x))=(O_1(y),\ldots,O_{k-3}(y))$ for any $y\in N_{k-1}(x)$, which gives $(O_1(x),\ldots,O_{k-3}(x))=(O_1,\ldots,O_{k-3})$, from the definition of $x=Y\bigcup_{y\in Y} N(y)$. Second, from Remark~\ref{V1}, we have $K(O_{k-1}(x))=N_1(x)$ and we already know that $|N_1(x)|\geq 2$ (see beginning of the proof of $H_{sur}(k)$). This gives $|K(O_{k-1}(x))|\geq 2$, and as we have $|O_1(x)|\geq 2$ and $O_1(x)\subseteq \ldots\subseteq O_{k-2}(x)\subseteq O_{k-1}(x)$, we obtain that $O_{k-2}(x),O_{k-1}(x)\in \cur{O}$. Now, from $H_{N}(k)$, we know that the couple $(O_{k-2}(x),O_{k-1}(x))$ is such that $N_{k-1}(x)=\set{y\in V_{k-1}\ |\ O_{k-2}(x) \subseteq O_{k-2}(y) \subseteq O_{k-1}(x) \text{ and } (O_1(y),\ldots ,O_{k-3}(y))=(O_1,\ldots ,O_{k-3})}$. And by definition of $x$, the couple $(O_{k-2},O_{k-1})$ also satisfies this condition. But since $O_{k-2},O_{k-1},O_{k-2}(x),$ $O_{k-1}(x)$ all belong to $\cur{O}$, then, from recursion hypothesis $H_{sur}(k-1)$, for any $P\in\set{O_{k-2},O_{k-1},O_{k-2}(x),O_{k-1}(x)}$ there exists $y\in V_{k-1}$ such that $S(y)=(O_1,\ldots,$ $O_{k-3},P)$. Then, for $P=O_{k-1}$, using the definition of $N_{k-1}(x)$ based on the couple $(O_{k-2},O_{k-1})$, we obtain that $y\in N_{k-1}(x)$, and consequently, using the definition of $N_{k-1}(x)$ based on the couple $(O_{k-2}(x),O_{k-1}(x))$, we have $O_{k-1}\subseteq O_{k-1}(x)$. Similarly, for $P=O_{k-1}(x)$ we obtain $O_{k-1}(x)\subseteq O_{k-1}$, and it follows that $O_{k-1}(x)=O_{k-1}$. Analogously, choosing $P=O_{k-2}$ and then $P=O_{k-2}(x)$ shows that $O_{k-2}(x)=O_{k-2}$. Thus, $H_{sur}(k)$ is true.\\

\noindent\textbf{Proof of $H_E(k)$.}

Let $k \geq 4$, and let $y_1,y_2\in V_{k}$ such that $N_{k-2}(y_1)=N_{k-2}(y_2)$. We will show that for all $p\in\int{0,k-2}\setminus\set{1}, N_p(y_1)=N_p(y_2)$.
From $H_{N}(k)$ applied to $y_1$ and $y_2$, we get:\\
$N_{k-2}(y_1)= W_{k-2}(y_1)=\set{t \in V_{k-2}\ |\ (O_1(t),\ldots ,O_{k-4}(t))=(O_1(y_1),\ldots ,O_{k-4}(y_1))$ $\text{ and } O_{k-3}(y_1)\subseteq O_{k-3}(t)\subseteq O_{k-2}(y_1)}$, and $N_{k-2} (y_2) = W_{k-2}(y_2)$.
Since $N_{k-2} (y_1)=N_{k-2}(y_2)$, then by considering a common element $t$ of these two sets (which are non empty since $O_{k-2}(y_1)\in\cur{O}$), we have $(O_1(y_1),\ldots,O_{k-4}(y_1))=(O_1(y_2),\ldots,O_{k-4}(y_2))$ (using the usual convention on empty sequences). We now prove that $O_{k-3}(y_1) = O_{k-3}(y_2)$. From recursion hypothesis $H_{sur}(k)$, we know that there exists $t_1\in V_{k-2}$ such that $S(t_1)=(O_1(y_1),\ldots ,O_{k-3}(y_1))$. This element $t_1$ is clearly an element of $N_{k-2}(y_1)$, and then is an element of $N_{k-2}(y_2)$. Then, we have $O_{k-3}(y_2) \subseteq O_{k-3}(t_1) = O_{k-3}(y_1)$. Symmetrically, by considering an element $t_2\in V_{k-2}$ such that $S(t_2)=(O_1(y_2),\ldots ,O_{k-3}(y_2))$, we obtain $O_{k-3}(y_1) \subseteq O_{k-3}(y_2)$. And finally, we have $O_{k-3}(y_1) = O_{k-3}(y_2)$.
Then, regardless of the value of $k\geq 4$, we have $N_0(y_1) = O_1(y_1) = O_1(y_2) = N_0(y_2)$, which is enough to prove $H_E(k)$ when $k=4$. Let us complete the general case where $k\geq 5$ by considering some $p\in\int{2,k-3}$ and showing that $N_p(y_1)=N_p(y_2)$. From recursion hypothesis $H_{N}(k)$, we have $N_p(y_1)= \set{t \in V_p\ |\ (O_1(t),\ldots ,O_{p-2}(t))=(O_1(y_1),\ldots ,O_{p-2}(y_1)) \text{ and }  O_{p-1}(y_1)\subseteq O_{p-1}(t)\subseteq O_p(y_1)}$. And since $p\leq k-3$, $(O_1(y_1),\ldots,O_{p}(y_1))=(O_1(y_2),\ldots,O_{p}(y_2))$, which implies $N_p(y_1) = N_p(y_2)$. Thus, for all $p\in\int{0,k-2}\setminus\set{1}, N_p(y_1)=N_p(y_2)$.

This shows that $H_E(k)$ is true, which ends the recursion step and the proof of Theorem~\ref{ThCharSeq}.

\end{proof}

The termination of the series directly follows from the bijection theorem (Theorem~\ref{ThCharSeq} above) between the vertices of the multipartite graph and the chains of $\cur{L}$.

\begin{theorem}[Termination theorem]\label{CFSstop}
For any graph $G$, the clean-factor series $(G_i)_{i\geq 1}$ generated by $G$ terminates.
\end{theorem}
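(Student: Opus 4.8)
The plan is to derive termination as an immediate consequence of the Bijection theorem (Theorem~\ref{ThCharSeq}), the only extra ingredient being the finiteness of the poset $\cur{L}$. So first I would record an elementary fact: $\cur{O}$, the underlying set of $\cur{L}$, is a set of subsets of the finite set $V(G)$, hence is finite, and each of its elements has cardinality between $2$ and $|V(G)|$. Consequently, any strictly increasing chain $O_1\subsetneq\cdots\subsetneq O_m$ with $O_1,\ldots,O_m\in\cur{O}$ satisfies $m\le|V(G)|-1$. Let $h$ denote the largest number of elements appearing in such a chain (the height of $\cur{L}$, counted in vertices); then $\cur{O}$ contains no strictly increasing chain of more than $h$ elements.

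Next I would argue by contradiction. Assume the clean-factor series $(G_i)_{i\ge1}$ generated by $G$ does not terminate. By Definitions~\ref{def:series} and~\ref{def:conv}, this means $G_i$ is defined for every $i\ge1$ and the factorisation of $G_i$ is effective for every $i\ge1$; so, by Definition~\ref{def:factorisation}, the level $V_{i+1}$ is non-empty for all $i\ge1$, that is, $V_k\neq\varnothing$ for all $k\ge2$. I would then apply Theorem~\ref{ThCharSeq} with $k=h+2$: it provides a bijection between $V_{h+2}$ and the set of sequences $(O_1,\ldots,O_{h+1})\in\cur{O}^{h+1}$ with $O_1\subsetneq\cdots\subsetneq O_{h+1}$. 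Each such sequence is a strictly increasing chain of $h+1$ elements of $\cur{O}$, and no such chain exists by the choice of $h$. Hence $V_{h+2}=\varnothing$, contradicting the previous sentence. Therefore the series terminates.

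Finally, I would extract the quantitative statement this argument gives for free: the clean-factor series of $G$ stops at the latest when it would produce $G_{h+2}$, where $h$ is the height of $\cur{L}$ (equivalently, the number of elements of a longest chain of non-simple intersections of maximal cliques of $G$); thus $G_{h+2}$ and all subsequent graphs are undefined. I do not expect any real obstacle here: all the work has already been done in Theorem~\ref{ThCharSeq}, and the statement is even flagged as a direct corollary. The only mild points of care are the bookkeeping relating a level index $k$ to the length $k-1$ of the associated chain, and the observation that termination is forced purely by the finiteness of $\cur{O}$, with no structural hypothesis on $G$.
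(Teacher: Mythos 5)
Your proposal is correct and follows essentially the same route as the paper: invoke Theorem~\ref{ThCharSeq} to identify vertices of level $V_k$ with strictly increasing chains of $k-1$ elements of $\cur{O}$, and conclude from the finiteness of $\cur{O}$ (chains have at most $|V(G)|-1$ elements, since members have sizes between $2$ and $|V(G)|$) that some level must be empty, forcing termination. The only cosmetic differences are your contradiction framing and your counting of the height of $\cur{L}$ in vertices rather than in edges, both handled consistently.
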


\begin{proof}
Theorem~\ref{ThCharSeq} states that the characterising sequence $(O_1(x),\ldots ,O_{k-1}(x))$ of any node $x$ at level $k$ is such that $O_1(x)\subsetneq \ldots \subsetneq O_{k-1}(x)$. The strict inclusions imply that the length of the characterising  sequence, which is equal to $k-1$, cannot exceed $h+1$, where $h$ is the height of $\cur{L}$. Since $h\leq n-2$, necessarily $V_{n+1}$ is empty. It follows that the clean-factor series terminates and that the multipartite graph on which it terminates has at most $n+1$ levels, that is the upper level has index at most $n$.
\end{proof}

In our definition of the clean-factor series, the first bipartite graph of the series is always the vertex-clique-incidence bipartite graph of some graph. It is worth to note that our result of termination is actually more general: the iteration of the clean-factor operator starting from an arbitrary bipartite graph always terminates too.

\begin{corollary}\label{cor:bipgen}
For any bipartite graph $H$, the series of multipartite graphs obtained by iteratively applying the clean-factor operator starting from $H$ terminates.
\end{corollary}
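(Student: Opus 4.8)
The plan is to observe that the entire development of Section~\ref{sec-cfs} never really uses the hypothesis that $G_1$ is the vertex-clique-incidence bipartite graph of some graph, except at one point, and to remove that dependency. Concretely, let $H=(V_0,V_1,E)$ be an arbitrary bipartite graph and run the clean-factor iteration from $G_1=H$. For $A\subseteq V_0$ put $K_H(A)=\set{x\in V_1\mid A\subseteq N_0(x)}$, set $\cur{O}'_H=\set{\bigcap_{x\in P}N_0(x)\mid P\subseteq V_1}$ (with the convention $\bigcap_{x\in\varnothing}N_0(x)=V_0$), $\cur{C}_H=\set{K_H(O)\mid O\in\cur{O}'_H}$, and $\cur{O}_H=\set{O\subseteq V_0\mid |O|\geq 2\text{ and }O=\bigcap_{x\in P}N_0(x)\text{ for some }P\subseteq V_1\text{ with }|P|\geq 2}$. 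When $H=B(G)$ these are exactly the map $K$ of Definition~\ref{defKC} and the sets $\cur{O}$, $\cur{O}'$, $\cur{C}$ of Definitions~\ref{DefOK}--\ref{defKC}. The claim I would prove is that Theorems~\ref{ThCharSeq} and~\ref{CFSstop} hold verbatim with $\cur{K}(G)$ replaced by $V_1$, $K$ by $K_H$, and $\cur{O},\cur{O}',\cur{C}$ by $\cur{O}_H,\cur{O}'_H,\cur{C}_H$; in particular the vertices of level $V_k$ of the series starting from $H$ are in bijection with the chains of length $k-2$ of the inclusion order on $\cur{O}_H$, so $V_{|V_0|+1}=\varnothing$ and the series terminates.

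To carry this out, I would first check that Remarks~\ref{Kincl}, \ref{PteOK} and~\ref{V1}, the lemma asserting that $\cur{O}'$ and $\cur{C}$ are closed under intersection, and Lemma~\ref{Levels012} all go through with $K_H$ in place of $K$: each of their proofs uses only the two facts that $O=\bigcap_{x\in K_H(O)}N_0(x)$ for every $O\in\cur{O}'_H$ and that $x=N_1(x)\cup N_0(x)$ with $N_1(x)=K_H(N_0(x))$ for every $x\in V_2$, and both follow from maximality in $V^+_2$ without ever using that $V_1$ consists of cliques. Once Lemma~\ref{Levels012} is available as the base case, the long proof of Theorem~\ref{ThCharSeq} is phrased entirely in terms of the neighbourhoods $N_j(\cdot)$ and $N_1(\cdot)$, the sets $O_j(\cdot)$ and $W_j(\cdot)$, the operator $K_H$, and the side conditions of Definition~\ref{CleanFactor}, none of which refers to the origin of $V_1$; so the five recursion hypotheses $H_{tar},H_{inj},H_{sur},H_N,H_E$ remain provable unchanged, and Theorem~\ref{CFSstop} follows as before, the characterising sequence of a vertex of $V_k$ being a strictly increasing chain of $k-1$ members of $\cur{O}_H$ of size at least $2$, whence $k-1\leq|V_0|-1$. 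The one delicate point, and the main obstacle, is precisely this audit of the proof of Theorem~\ref{ThCharSeq}: one has to make sure that nowhere is it implicitly used that two distinct upper vertices of $G_1$ have distinct or incomparable neighbourhoods beyond what $K_H$ already encodes. This is a matter of inspection rather than of new ideas.

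Alternatively, one can reduce directly to Theorem~\ref{CFSstop} without re-reading its proof, by building a graph whose clean-factor series ``contains'' that of $H$. Take a fresh vertex $z$, set $V(G)=V_0\sqcup V_1\sqcup\set{z}$, make $V_0\cup\set{z}$ a clique, keep $V_1$ independent, and join each $x\in V_1$ to the vertices of $N_0(x)\subseteq V_0$; a direct check gives $\cur{K}(G)=\set{V_0\cup\set{z}}\cup\set{\set{x}\cup N_0(x)\mid x\in V_1}$, so $B(G)$ is $H$ together with one private pendant vertex per upper vertex of $H$, one pendant vertex $z$, and one additional upper vertex $c$ whose neighbourhood $V_0\cup\set{z}$ contains, on $V_0$, that of every other upper vertex. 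Since $N_0(\set{x}\cup N_0(x))\cap N_0(\set{x'}\cup N_0(x'))=N_0(x)\cap N_0(x')$ for $x\neq x'$, the pendant vertices never lie in a non-simple biclique, and $c$ can always be appended to the upper part of a biclique without changing its bottom part; an induction on the level then produces an injection from level $V_k$ of the series of $H$ into level $V_k$ of the series of $G=G_0$ (which terminates by Theorem~\ref{CFSstop}), whence the series of $H$ terminates. Here the obstacle is verifying that this injection survives the successive special definitions of $V^*_2,V^*_3,V^*_4$ and the general $V^*_k$ of Definition~\ref{CleanFactor}, i.e. that the side conditions on $N_{k-2}$, $N_{k-3}$ and $N_1$ are undisturbed by the pendant vertices and by $c$.
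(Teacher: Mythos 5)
Both of your routes are workable, and your second one is essentially the paper's own argument: the paper also reduces to Theorem~\ref{CFSstop} by attaching to each upper vertex of $H$ a private pendant bottom vertex, so that the upper neighbourhoods become pairwise non-included and the resulting bipartite graph is (claimed to be) the vertex-clique-incidence graph of some $G'$; since each pendant lies in a single maximal clique it never enters a non-simple biclique, and the series of $G'$ coincides with that of $H$ from level $V_1$ upwards. Your graph $G$ realises the same pendant idea one level lower ($x\in V_1$ itself is the pendant of the clique $\set{x}\cup N_0(x)$), but making $V_0\cup\set{z}$ a clique forces the extra upper vertex $c$ into every non-simple biclique, which downgrades ``the two series coincide'' to a level-by-level injection that must be dragged through all the side conditions of Definition~\ref{CleanFactor} --- bookkeeping the paper avoids. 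In exchange, your explicit construction is tighter on one point: pairwise non-inclusion of upper neighbourhoods does not by itself make a bipartite graph a vertex-clique-incidence graph (the sets $\set{1,2,p_a},\set{2,3,p_b},\set{1,3,p_c}$ generate the extra maximal clique $\set{1,2,3}$), whereas your $G$ has exactly the prescribed maximal cliques because $V_0\cup\set{z}$ is complete. Your first route --- rerunning Section~\ref{sec-cfs} with $K_H$ and $\cur{O}_H$ in place of $K$ and $\cur{O}$ --- is genuinely different and more informative, since it yields the full bijection with chains and not just termination; your identification of the two facts the base case needs is accurate, but as you concede it remains an unexecuted audit of a long induction, so both routes stop short of a complete proof precisely at the points you flag.
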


\begin{proof}
Let $H=(V_0,V_1,E)$ be an arbitrary bipartite graph. Now, consider the vertex-clique-incidence bipartite graph $H'=(V'_0,V_1,E')$ built from $H$ in the following way: for each vertex $y\in V_1$, add a particularising vertex $x\in V'_0$ linked only to $y$. Then, in $H'$, the sets of neighbours on $V'_0$ of vertices of $V_1$ are not pairwise included and it follows that $H'$ is the vertex-clique-incidence bipartite graph of some graph\footnote{Graph $G'$ is simply the graph whose vertex set is $V'_0$ and whose maximal cliques are the neighbourhoods in $H'$ of the vertices of $V_1$.} $G'$. Moreover, since the vertices we added on level $V'_0$ are included in only one maximal clique of $G'$, then the vertices at level $V_2$ in the clean-factor graph of $H$ and $H'$ are the same and have the same neighbourhoods. And this holds for all other levels of the series as well: from level $V_1$ and above, the series of $H'$ is identical to the one of $H$. And since from Theorem~\ref{CFSstop}, the series of $H'$ terminates, so does the series of $H$.
\end{proof}

%
%
%

\section{Practical utility of the model}\label{sec-pract}

In addition to the theoretic questions we addressed, our work was motivated by designing a model of complex networks that, while remaining very general, encompass both the local density and the heterogeneous degree distribution of those graphs encountered in practice. In this section, we emphasise on the fact that our modelling object, the multipartite graph on which the clean factor series terminates, which we call the \emph{clean-factor decomposition}, is suitable for practical use, with regard to size and time of computation. This allowed us to compute the clean-factor decomposition of very large graphs having hundreds of thousands of vertices and millions of edges.
These practical results are not presented here since they are far beyond the scope of this work. But, in the following, we give theoretic evidence of why the clean-factor decomposition is a suitable model to manipulate large real-world instances of graphs, based on common properties of those graphs.

The size of the multipartite graph $M$ obtained at termination of the clean-factor series can be exponential in theory, as the number of maximal cliques itself may be exponential. But in practice, its size is quite reasonable and it can be computed efficiently. Indeed, the size of $M$ mainly depends on the complexity of imbrication of maximal cliques, namely on the number of chains of $\cur{L}$ (Theorem~\ref{ThCharSeq}). Theorem~\ref{th:size} below shows that under reasonable hypotheses, this number is linearly bounded and the size of $M$ only linearly depends on the number of vertices of $G$.

It must be clear that our hypotheses imply that the number of maximal cliques of $G$ is linearly bounded, as the vertices on level $V_1$ of $M$ are precisely the maximal cliques of $G$. But on the other hand, note that this bound on the number of maximal cliques is not sufficient to guarantee a polynomial bound on the number of vertices of $M$: there may still be an exponential number of vertices on the upper levels of $M$. Theorem~\ref{th:size} below shows that this does not happen under our hypotheses.

\begin{theorem}\label{th:size}
If every vertex of $G$ is involved in at most $k$ maximal cliques and if every maximal clique of $G$ contains at most $c$ vertices, then we have $$|V(M)|\leq min(k\,2^c\,c!\, , 2^k\,k!+1)\times n$$
\end{theorem}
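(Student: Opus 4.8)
The strategy is to use the bijection of Theorem~\ref{ThCharSeq}: the vertices of $M$ are $V_0\cup V_1$ together with, for each $k\geq 2$, the chains $O_1\subsetneq\cdots\subsetneq O_{k-1}$ of elements of $\cur{O}$. Since $|V_0|=n$ and $|V_1|=|\cur{K}(G)|$, the whole point is to bound the number of chains in $\cur{L}$. I would bound this count in two different ways, corresponding to the two terms in the $\min$, and each bound will itself be obtained by a counting argument that ``charges'' every chain to a bounded amount of local data at a single vertex of $G$.

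For the bound $k\,2^c\,c!$: every non-simple intersection $O\in\cur{O}$ satisfies $2\leq|O|\leq c$ (it is contained in a maximal clique, hence in $\leq c$ vertices). Fix a chain $O_1\subsetneq\cdots\subsetneq O_{m}$ with all $O_i\in\cur{O}$. Pick any vertex $v\in O_1$; then $v\in O_i$ for all $i$, so $v$ lies in $\leq k$ maximal cliques, and in particular $v\in O_m\subseteq C$ for some maximal clique $C$ with $|C|\leq c$. All the $O_i$ are subsets of $C$ containing $v$, so the chain is a chain of subsets of the $(\leq c)$-element set $C$. The number of chains (of any length) of subsets of a fixed $c$-element set is at most $2^c\,c!$ (there are at most $c!$ maximal chains, i.e. orderings, and every chain refines to one; more crudely, a chain is a subset of the poset $2^{[c]}$ that is totally ordered, and a very loose count suffices here --- the authors clearly only need \emph{some} bound of this shape, so I would simply bound the number of chains by the number of pairs (maximal clique $C$ through a fixed vertex, chain inside $2^{C}$)). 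Summing over the $\leq n$ choices of a distinguished vertex $v$, the $\leq k$ maximal cliques through $v$, and the $\leq 2^c c!$ chains inside each such clique, we get $|V(M)|\leq n + |\cur{K}(G)| + (\text{chains}) \leq k\,2^c\,c!\,\cdot n$ after absorbing the lower-order terms (note $|\cur{K}(G)|\leq kn$ and $n$ itself are dominated).

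For the bound $2^k\,k!+1$: dualise using the map $O\mapsto K(O)$. By Remark~\ref{Kincl} this map is inclusion-reversing and, restricted to $\cur{O}'$, injective (since $O=\bigcap_{C\in K(O)}C$ recovers $O$ from $K(O)$), so a chain $O_1\subsetneq\cdots\subsetneq O_m$ in $\cur{O}$ maps to a strictly \emph{decreasing} chain $K(O_1)\supsetneq\cdots\supsetneq K(O_m)$ of subsets of $\cur{K}(G)$. Now fix any maximal clique $C\in K(O_m)$ (nonempty since $O_m\in\cur{O}$ means $O_m$ is an intersection of $\geq 2$ maximal cliques). Then $C\in K(O_i)$ for all $i$, i.e. $O_i\subseteq C$, so every $K(O_i)$ is contained in $K(\{v\})$ for each $v\in O_i$; in particular each $K(O_i)\subseteq K(O_1)=\bigcap_{v\in O_1}K(v)\subseteq K(v_0)$ for a fixed $v_0\in O_1$, a set of size $\leq k$. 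Hence the decreasing chain lives inside the Boolean lattice $2^{K(v_0)}$ on $\leq k$ elements, which has at most $2^k k!$ chains. Charging each chain to its distinguished bottom vertex $v_0\in V_0$ (at most $n$ choices), we obtain at most $2^k k!\,\cdot n$ chains; adding the single vertex accounting for $V_0$ itself (or the ``empty chain'' conventions) yields the $+1$, and again $|V_1|\leq kn$ is absorbed, giving $|V(M)|\leq (2^k k!+1)n$.

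The main obstacle, and the only place requiring care, is the bookkeeping in the charging argument: a given chain may be charged to several distinguished vertices $v_0$, but that is fine since we want an \emph{upper} bound --- we just need every chain charged to \emph{at least} one $(v_0,\text{local chain})$ pair, which the argument above guarantees. The delicate points are (i) making sure the degenerate cases ($V_0$ and $V_1$, i.e. chains of length $0$ and $-1$ in the authors' indexing, and the empty intersection convention $\bigcap_{\varnothing}=V(G)$) are counted consistently so the final constants come out exactly as stated, and (ii) verifying that the crude count of chains in a Boolean lattice on $c$ (resp. $k$) elements is indeed $\leq 2^c c!$ (resp. $\leq 2^k k!$) --- this follows because every chain extends to a maximal one (a permutation of the ground set, $\leq c!$ choices) and is then a sub-chain of it ($\leq 2^c$ choices of which levels to keep). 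Everything else is a routine summation.
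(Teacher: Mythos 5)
Your proposal is correct and follows essentially the same route as the paper: both bounds come from viewing each chain of $\cur{L}$ as a sub-chain of a maximal chain in a Boolean lattice on at most $c$ elements (inside a maximal clique, giving $k\,2^c\,c!\,n$) or on at most $k$ elements (via the inclusion-reversing correspondence $O\mapsto K(O)$, giving $(2^k\,k!+1)\,n$), and then charging chains to at most $n$ anchors. The only cosmetic difference is that for the second bound the paper anchors chains at the minimal elements of $\cur{O}'\setminus\set{\varnothing}$ (pairwise disjoint, hence at most $n$ of them) rather than at a vertex $v_0\in O_1$, and it likewise absorbs the $V_0$ and $V_1$ terms by noting that the degenerate singleton sequences are already included in the count.
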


\begin{proof}
Thanks to Theorem~\ref{ThCharSeq}, we obtain an upper bound on $|V(M)|$ by bounding the number of strictly increasing sequences of the form $(O_1,\ldots,O_i)$ such that $O_1,\ldots,O_{i-1}\in \cur{O}$.

First, we use the fact that all such sequences are sub-sequences of those obtained starting from a clique $O_i$ and recursively removing one vertex at each step until one obtains a pair $O_1$. The number of such sequences starting with a fixed clique is at most the number of orders on the $c$ vertices of the clique, that is $c!$. And the number of sub-sequences of a sequence of length $c$ is $2^c$. Finally, since each vertex is included in at most $k$ maximal cliques, the number of maximal cliques is at most $k\,n$. Then, there are at most $k\,2^c\,c!\,n$ increasing sequences made of elements of $\cur{O}$, which are in bijection with the vertices of $M$ of level at least $2$. Moreover, note that our counting also includes the sequences made of one single maximal clique of $G$ and the sequences made of one single set $\cur{O}$ which is a singleton. Those particular sequences are in bijection with the vertices of $M$ at level $1$ and $0$ respectively. Then, we obtain $|V(M)|\leq k\,2^c\,c!\,n$.

Clearly, since $\cur{O}\subseteq \cur{O}'$, the number of strictly increasing sequences made of elements of $\cur{O}$ is at most the number of strictly increasing sequences made of elements of $\cur{O}'$. Another way to count those latter sequences is to count those starting with a fixed minimal set $O_{min}\in\cur{O}'\setminus\set{\varnothing}$. Since $\cur{O}'$ is closed under intersection, minimal elements of $\cur{O}'\setminus\set{\varnothing}$ are pairwise disjoint and therefore their number is at most $n$. The sequences having $O_{min}$ as first set can be formed by starting from a clique containing $O_{min}$ and iteratively intersecting it with another clique containing $O_{min}$. By hypothesis, there are at most $k$ cliques containing a given $O_{min}$, and therefore $k!$ orders on these $k$ cliques. Each order gives rise to a sequence of elements of $\cur{O}'$, which contains $2^k$ sub-sequences. Thus, there are at most $k!\,2^k$ strictly decreasing sequences of elements of $\cur{O}'$ having $O_{min}$ as first element. Note that the counting we made actually also comprises the sequences made of one single maximal clique of $G$. Consequently, the number of vertices in $M$ at level at least $1$ is at most $k!\,2^k\,n$, and adding the $n$ vertices at level $0$ to this count we obtain the bound $|V(M)|\leq (2^k\,k!+1)\times n$, which completes the proof.
\end{proof}

In practice, parameters $k$ and $c$ are quite small, as they are often constrained by the context where the graphs come from (e.g. social networks, computer networks, citation networks) independently from the size of the graph. Then, the size of $M$ is reasonable in practice, namely $O(n)$ for class of graphs where $k$ and $c$ are bounded. An important consequence is that for those graphs it is possible to compute $M$ in low polynomial time. For example, under those hypotheses, the algorithm of \cite{WWDYC06} enumerates all maximal cliques of graph $G$ in linear time with regard to the number of maximal cliques, that is $O(n)$ time in this case. Moreover, \cite{GNS09} shows that, for general bipartite graphs on $|B|$ vertices, it is possible to enumerate their maximal bicliques in $O(|B|^2)$ time per biclique (see also \cite{AACFHS04} for a survey on maximal bicliques enumeration). In the computation of $M$, at any stage we have to compute the maximal bicliques of the bipartite graph between the uppermost level and the rest of the levels. Since, from Theorem~\ref{th:size}, under our hypotheses, the size of $M$ is $O(n)$, then the time needed to compute one maximal biclique is $O(n^2)$. And as we need to compute at most $O(n)$ bicliques along the algorithm, it follows that the total time spent by the algorithm for computing the maximal bicliques involved in the construction of $M$ is $O(n^3)$. Finally, as the rest of the treatments needed for the construction of $M$ can be achieved in polynomial time, it turns out that, under the hypotheses of Theorem~\ref{th:size}, $M$ can be computed in polynomial time.

These facts explain that, in practice, using as black boxes the implementation \cite{TTTprog} of \cite{TTT06}'s algorithm for enumeration of maximal cliques and the implementation \cite{LCM} of \cite{UAUA04}'s algorithm for enumeration of maximal bicliques, we could compute the clean-factor decomposition of graphs with thousands and even hundred of thousands of nodes. Indeed, we did so for a protein interaction network of $1\,458$ vertices and $1\,948$ edges, a movie actors network of $392\,340$ vertices and $15\,038\,083$ edges, and a piece of the world-wide-web graph of $325\,740$ nodes and $1\,090\,108$ edges (all can be found at \cite{CCNR}).

This shows that, even though the problem of computing the maximal cliques and bicliques is NP-hard for arbitrary graphs, for graphs encountered in practice, since this computation can be done in polynomial time (under the hypotheses of Theorem~\ref{th:size}), it is possible to efficiently compute the clean-factor series. This makes the clean-factor model a very promising tool for modelling complex networks.



\section{Conclusion}\label{sec-conclu}

In this paper, we studied the termination of the weak-factor operator, which is a multipartite graph operator appeared in the context of complex network modelling. One key issue in this context is that the series obtained by iteratively applying the operator terminates, as this is mandatory in order to obtain an object suitable for modelling. Since the weak-factor series does not always terminate, we designed a refinement of this operator, called the clean-factor graph, whose series terminates for all input graphs. And we showed that this modelling approach is practically efficient in the sense that the clean-factor series can be computed even for large graphs, under reasonable assumptions on their structure.

The first question arising from our work is to find minimal restrictions of the weak-factor operator that guarantee termination for all graphs. Indeed, it is crucial in practice to introduce constraints as light as possible, since those constraints, that have to be respected during the random generation process, makes this process more intricate to design and less efficient. In particular we ask whether the condition requiring equality of the neighbourhoods at level $V_{k-3}$ in the definition of the clean-factor graph can be replaced by a condition requiring only that these neighbourhoods share at least two common vertices.


Moreover, the use of multipartite graphs as models of complex networks, in the spirit of the bipartite decomposition~\cite{ipl04guillaume,physicaa06guillaume}, asks for some other important questions. In this context, the key issue is to generate a random multipartite graph while preserving the properties of the original graph. To do so, one has to express the properties to preserve as functions of basic multipartite properties (like degrees, for instance) and to generate random multipartite graphs satisfying these properties. This is a very promising direction for complex network modelling, but much remains to be done.

\medskip
\noindent
{\bf Acknowledgements.}
We warmly thank Thanh Qui Nguyen and The Hung Tran for helpful discussions on the subject, as well as Cl\'emence Magnien and St\'ephan Thomass\'e for their comments on the writing of the article.


\small{
\bibliographystyle{plain}
\bibliography{xbib,perso,clique-graphs}
}

\end{document}